\DeclareMathAlphabet{\mathpzc}{OT1}{pzc}{m}{it}
\newtheorem{theorem}{\textbf{\textsc{Theorem}}}
\begin{document}
\title{Securing MIMO Wiretap Channel with Learning-Based Friendly Jamming under\\ Imperfect CSI}
\author{
\IEEEauthorblockN{Bui Minh Tuan, Diep N. Nguyen, Nguyen Linh Trung, Van-Dinh Nguyen, Nguyen Van Huynh, Dinh Thai Hoang, Marwan Krunz,  and Eryk Dutkiewicz
\thanks{Bui Minh Tuan, Diep N. Nguyen, Dinh Thai Hoang, and Eryk Dutkiewicz are with the School of Electrical and Data Engineering, University of Technology Sydney, NSW 2007, Australia (bui.m.tuan@student.uts.edu.au; \{diep.nguyen, hoang.dinh, eryk.dutkiewicz\} @uts.edu.au).}
}
\thanks{Van-Dinh Nguyen is with the College of Engineering and Computer Science and the Center for Environmental Intelligence, VinUniversity, Vinhomes Ocean Park, Hanoi, Vietnam (dinh.nv2@vinuni.edu.vn).}
\thanks{Nguyen Van Huynh is with the Department of Electrical Engineering and Electronics, University of Liverpool, Liverpool, L69 3GJ, United Kingdom (e-mail: huynh.nguyen@liverpool.ac.uk).}
\thanks{Marwan Krunz is with the Department of Electrical and Computer Engineering, The University of Arizona, Tucson, AZ, USA (krunz@arizona.edu).}
\thanks{Nguyen Linh Trung (correspondence) is with the Advanced Institute of Engineering and Technology, University of Engineering and Technology, Vietnam National University, Hanoi, Vietnam (linhtrung@vnu.edu.vn).}
}
\maketitle

\begin{abstract}
Wireless communications are particularly vulnerable to eavesdropping attacks due to their broadcast nature. To effectively deal with eavesdroppers, existing security techniques usually require accurate channel state information (CSI), e.g. for friendly jamming (FJ), and/or additional computing resources at transceivers, e.g. cryptography-based solutions, which unfortunately may not be feasible in practice. This challenge is even more acute in low-end IoT devices. We thus introduce a novel deep learning-based FJ framework that can effectively defeat eavesdropping attacks with imperfect CSI and even without CSI of legitimate channels. In particular, we first develop an autoencoder-based communication architecture with FJ, namely AEFJ, to jointly maximize the secrecy rate and minimize the block error rate at the receiver without requiring perfect CSI of the legitimate channels. In addition, to deal with the case without CSI, we leverage the mutual information neural estimation (MINE) concept and design a MINE-based FJ scheme that can achieve comparable security performance to the conventional FJ methods that require perfect CSI. Extensive simulations in a multiple-input multiple-output (MIMO) system demonstrate that our proposed solution can effectively deal with eavesdropping attacks in various settings. Moreover, the proposed framework can seamlessly integrate MIMO security and detection tasks into a unified end-to-end learning process. This integrated approach can significantly maximize the throughput and minimize the block error rate, offering a good solution for enhancing communication security in wireless communication systems.
\end{abstract}
\begin{IEEEkeywords}
Autoencoder, IoT security, anti-eavesdropping, friendly jamming, multiple-input-multiple-ouput (MIMO), mutual information, physical layer security, wiretap channel, mutual information neural estimation (MINE).
\end{IEEEkeywords}
	
\section{Introduction}
The rapid development of the Internet of Things (IoT) has paved the way for establishing connections and gathering data across various domains, from smart homes and cities to healthcare, transportation, and Industry 4.0. However, with a large number of diverse wireless devices, IoT networks are highly vulnerable to adversaries due to the broadcast nature of wireless communications. Among radio threats, eavesdropping is one of the most common and serious attacks. Specifically, with off-the-shelf circuits, adversaries can effectively launch eavesdropping attacks to passively wiretap legitimate wireless channels to intercept the transmitted information from the transmitter. As a result, the need for lightweight anti-eavesdropping approaches that accommodate the limited processing capabilities of IoT devices without compromising their operational efficiency is, therefore, critical.

\subsection{Related Work}
The most common countermeasure against eavesdroppers is encryption techniques \cite{mukherjee2014principles}, such as Advanced Encryption Standard (AES) and Rivest-Shamir-Adelman (RSA). Unfortunately, these methods require infrastructure for distributing public keys and the irresistibility of the underlying encrypted function. In addition, data encryption and decryption create burdens on computing resources for wireless devices, especially for computing- and resource-constrained devices such as IoT devices and wireless sensors~\cite{zhang2015secure, chu2023countering}. Moreover, conventional encryption techniques may not guarantee absolute protection when eavesdroppers have sufficient computational capability or employ sophisticated techniques to crack encryption keys~\cite{mukherjee2014principles}. Hence, when transceivers have limited capabilities as compared to eavesdroppers, such as IoT devices, legitimate communication links are still subject to being compromised. Finally, quantum computing may soon crack encryption techniques while post-quantum cryptography is still in its infancy and may not be feasible/suitable for power-constrained devices~\cite{bernstein2017post}.

Given the limitations of encryption techniques, physical layer security (PLS) has emerged as a promising solution to deal with eavesdropping attacks by leveraging the dynamic and stochastic nature of wireless channels~\cite{bloch2008wireless}. Specifically, PLS aims to achieve a positive secrecy capacity defined by the difference between the channel capacities of the legitimate link and the eavesdropper link~\cite{shannon1949communication,wyner1975wire}. To do that, PLS can focus on either random key generation or provisioning information-theoretic secrecy rates to design secure wireless communication systems~\cite{hamamreh2018classifications}. The former extracts channel state information (CSI) to generate random keys that can be used for authentication purposes. The latter includes channel coding, channel-based adaptation, and artificial noise (AN)/friendly jamming (FJ)-based security. Among these PLS approaches, the FJ technique has been widely studied in the literature due to its effectiveness in dealing with eavesdropping attacks. With the FJ technique, the transmitter (Tx) uses a proportion of the transmit power to deliberately inject noise signals to degrade the reception of the legitimate signals at eavesdroppers while not adversely impacting the legitimate receiver (Rx) \cite{wyner1975wire}. This can be achieved with the beamforming technique that is popular in multi-input, multi-output or multi-input-multi-output (MIMO) systems. The critical factor affecting the security performance of FJ is the availability of CSI at the Tx for beamforming purposes. In one of the early studies on FJ-based security~\cite{goel2008guaranteeing}, Tx-based friendly jamming (TxFJ) and cooperative jamming were proposed with the assumption that the perfect CSI between legitimate transceivers is known at the Tx. The FJ signal is then designed to project on the nullspace/directions of the legitimate channel so that the FJ noise signal will not affect the signal-to-noise ratio (SNR) of the Rx. Rx-based friendly jamming (RxFJ) was also proposed with a multi-user broadcast channel in~\cite{akgun2016exploiting}, guaranteeing a non-zero average secrecy rate, regardless of the position of the eavesdropper.

Securing multi-user link while managing the adverse effect of the FJ interference was further studied in~\cite {siyari2017friendly}. The authors used a non-cooperative game to model the FJ and transmit power control problem for the interfering wiretap links. In addition, the work in~\cite{choi2015robust} investigated beamforming based-FJ to achieve an instantaneous secrecy rate. In particular, the authors first derived the closed-form secrecy rate. Then, the Tx picks a suitable coding scheme to guarantee secrecy. Nevertheless, all the above methods as well as most general MIMO settings e.g., \cite{goel2008guaranteeing,akgun2016exploiting,siyari2017friendly,MIMO1, MIMO2,MIMOML} required or assumed perfect CSI or, at least, statistical CSI of the legitimate channel at the Tx to construct FJ signals. However, perfect or statistical CSI acquisition is very challenging, if not impractical, especially for IoT communications. 

The key reasons are the inaccuracy of the estimated CSI at the Rx (before feeding it back to the Tx) and the limited bandwidth of the feedback channel~\cite{mukherjee2015physical}. Hence, the FJ signals, designed to lie on the nullspace of the imperfect channel \cite{goel2008guaranteeing, akgun2016exploiting}, may degrade the quality of the (actual) legitimate channel and not be fully cancelled out at the Rx.
To overcome these issues, the authors in \cite{choi2015robust,mukherjee2010robust, 6101597} assumed statistical CSI (instead of full/complete CSI) to satisfy the QoS of SNR and the nonzero secrecy capacity. Then, the second perturbation analysis was used to design the beamforming scheme. Unfortunately, this method requires high computational capability at the Rx and high running time. 
Another approach in~\cite{6101597} proposed a masked beamforming scheme that leverages artificial noise to impair the decoding capabilities the eavesdropper, particularly regarding channel estimation delay and error. The instantaneous security performance of this scheme was then evaluated in terms of the minimum mean square error at the eavesdropper, providing a metric to assess its effectiveness in enhancing communication security amidst the discussed challenges. However, the alternative optimization used in the proposed approach has a low convergence speed. Furthermore, the work in~\cite{9319238} introduced an FJ-based authentication scheme. This scheme leverages FJ to obscure a hash-based message authentication code tag, facilitating secure authentication between legitimate users. This method requires significant overhead to be added to the current communication infrastructure.


\captionsetup{justification=centerlast}

\begin{table*}[ht]
\centering
\caption{Summary of Recent Studies on Friendly Jamming-based PLS.}
\label{tab:Summary_Recent_Studies}
\resizebox{\textwidth}{!}{ 
\begin{tabular}{|p{3cm}|p{7cm}|p{2.5cm}|p{3cm}|p{1.5cm}|}
\hline
\textbf{Related Works} & \textbf{Methodology} & \textbf{CSI} & \textbf{SC Optimization} & \textbf{BLER Optimization} \\
\hline
\multicolumn{5}{|l|}{\textbf{Conventional Approaches}} \\
\hline
Goel, \textit{et al.} \cite{goel2008guaranteeing}; B. Akgun, \textit{et al.} \cite{akgun2016exploiting}; Tsai, \textit{et al.} \cite{Tsai_Power_Allocation}; Hu, \textit{et al.} \cite{Hu_Co_Jamming_PLS} & FJ signal is precoded in the null-space of the legitimate channel while the information signal is in the range space of the channel & Perfect CSI & Yes, transmit power allocation based on exhaustive search & No \\
\hline
Siyari, \textit{et al.} \cite{siyari2017friendly} & Non-cooperative Game Theory & Perfect CSI & Yes & No \\
\hline
J. Choi \cite{choi2015robust} & Masked beamforming & Perfect CSI & Yes, semidefinite programming (SDP) & No \\
\hline
Mukherjee, \textit{et al.} \cite{mukherjee2010robust} & Null-space FJ beamforming & Statistical CSI & Yes & No \\
\hline
\multicolumn{5}{|l|}{\textbf{ML/DL Approaches}} \\
\hline
Yun, \textit{et al.} \cite{8957321} & Using both supervised pre-training to optimize the secrecy rate & Statistical CSI & Yes & No \\
\hline
Gou, \textit{et al.} \cite{Guo_Proactive_multiagent} & Neural fictitious self-play with soft actor-critic (NFSP-SAC), Multiagent Deep Reinforcement Learning-based FJ & Perfect CSI on legitimate channel & Yes & No \\
\hline
Our work & AE-based FJ and MINE-based FJ & Statistical CSI & Yes, leverage end-to-end learning and MINE to optimize SC & Yes \\
\hline
\end{tabular}
} 
\end{table*}

Table 1 summarizes conventional FJ approaches. In [1], the authors proposed transmitter-based friendly jamming (TxFJ) and cooperative jamming, assuming perfect CSI is available at Tx. The FJ signal is precoded in the null space of the legitimate channel, ensuring that the noise does not degrade the signal-to-noise ratio (SNR) at Rx, thus maintaining a non-zero average secrecy rate regardless of the eavesdropper's position. However, optimizing the secrecy rate involves an exhaustive search, leading to high computational costs. Another approach, based on game theory \cite{siyari2017friendly}, uses both Tx and Rx FJ, with power allocation optimized through a non-cooperative game model. The masked beamforming designed the FJ signal to be orthogonal to the transmit signal. Still, like other methods, it requires perfect CSI of the legitimate channel, which is difficult to achieve, especially in IoT communications. To address these limitations, the authors in \cite{mukherjee2010robust} proposed a beamforming-based FJ scheme that meets quality of service (QoS) requirements for SNR and guarantees non-zero secrecy capacity. However, designing the beamforming scheme using second-order perturbation analysis demands high computational resources at the receiver, making it less practical due to increased processing time and complexity.

Regarding DL/ML-based FJ approaches, Yun \textit{et al.} \cite{8957321} developed a learning-based FJ method that combines supervised pre-training with unsupervised post-training to optimize the secrecy rate. This approach accounts for imperfect CSI on the legitimate channel, utilizing available CSI error statistics. However, the two-stage training process adds overhead and may be impractical for real-world applications. In contrast, our work addresses imperfect CSI in more practical scenarios, ranging from statistical CSI to cases where no CSI is available at the transmitter. During offline training, outdated CSI containing errors is used to train the model. This allows the model to learn the imperfect channel conditions while simultaneously optimizing secrecy capacity (SC). In the testing stage, the model continues to achieve optimal SC even with new input and updated CSI.

\subsection{Motivation and Main Contributions}
As mentioned earlier, the security performance of FJ-based methods is highly affected by imperfect CSI (ImCSI). To cope with it, conventional methods were based on exhaustive search and second-order statistics that require high computational and transmission overhead. This work aims to develop an FJ solution based on deep learning (DL) in order to deal with ImCSI. To this end, we integrate the autoencoder (AE) architecture, with denoising and generalization capability, into the FJ approach. AE is designed to acquire a dataset representation (encoding) and reconstruct (decoding) the input from the output. Based on this encoding, it reconstructs a representation of the output that closely resembles the input \cite{erpek2018learning, o2017deep}. Unlike the previous works in ~\cite{mukherjee2010robust,fritschek2019deep, pmlr-v80-belghazi18a,9449919, Fulwani_DL_MISO}, and~\cite{lin2019beamforming}, we introduce a novel FJ method for PLS utilizing AE while the statistical information of channel estimation error is known. 
Consequently, the running time and overhead of our DL-based method can be reduced significantly compared to the conventional methods.

Moreover, to deal with the case without even statistical CSI (SCSI) at the transmitter, we integrate the method of mutual information neural estimation (MINE)~\cite{pmlr-v80-belghazi18a} into our AE architecture. MINE has been employed for transceiver design in a few studies in the literature. For example, the authors in~\cite {Fritschek2019} adopted MINE to improve the channel coding process at the transmitter. Differently, the authors in~\cite{9449919} demonstrated that combining AE and MINE could enhance transmission performance, including decoding error and throughput. However, these works and others in the literature only considered single-input-single-output systems and may not be effective when dealing with eavesdropping attacks. Our study here expands upon the traditional concept of relying on perfect CSI  to cancel out FJ at the Rx. In particular, our method empowers the legitimate  Tx and Rx to learn how to suppress FJ signals while degrading the eavesdropper~channel. In addition, the related works discussed above, including both conventional FJ and deep learning-based FJ, have not considered the BLER, or the reliability of the legitimate transmission, while optimizing the secrecy rate under imperfect CSI, from statistical to unknown CSI. Our work fills this gap. Additionally, it is worth noting that the proposed Deep Learning techniques offer a unified and scalable solution that can be applicable to both conventional optimization and existing learning-based approaches. The experimental results demonstrate the effectiveness of our proposed solutions in addressing the trade-offs between secrecy performance and computational complexity, especially in IoT and resource-constrained environments. In short, the major contributions of the paper are summarized as follows:

\begin{itemize}
    \item We jointly consider the security and reliability of the legitimate transmission by proposing an end-to-end learning-based framework that maximizing the secrecy rate while not sacrifying the reliability of the transmission, captured by the BLER.
    
    \item Our method can accommodate different levels of CSI, from perfect to statistical CSI or even completely unknown CSI, making it practical for real-world communication scenarios where CSI accuracy is often limited.
    
    \item We introduce a unified framework that combines end-to-end learning and secrecy rate optimization, integrating beamforming, jamming signal generation, and learning-based adaptation. This framework maximizes the signal-to-noise ratio (SNR) at the legitimate receiver while dynamically optimizing the secrecy capacity in real-time based on channel conditions.
    \item Extensive simulations are conducted under various CSI conditions, demonstrating that our proposed learning-based FJ model outperforms conventional methods in terms of secrecy capacity, BLER, and overall system efficiency, making it a practical and effective solution for secure communication in practical environments with imperfect or even no CSI.
\end{itemize}

The structure of this paper is as follows: Section~\ref{sec:FJbackground} provides a brief overview of the system model and the FJ-based Physical Layer Security (PLS) technique as a reference point for our new approach. Following this, Section~\ref{sec:AEFJ} details our novel AE-based FJ strategy, designed to counter eavesdroppers when only statistical information of channel estimation error is available. Section~\ref{subsubsec:MINE-based FJ} introduces our security framework utilizing MINE to safeguard communications without CSI. The analysis of secrecy and Block Error Rate (BLER) by simulations is presented in Section~\ref{sec:sim}. Conclusions are drawn in Section \ref{sec:VII}.
	
{\bfseries Notation:} Vectors and matrices are denoted by bold lowercase and uppercase letters, respectively. The absolute value of a real number, the magnitude of a complex number, and the complex conjugate transpose are, respectively, denoted by $\left\|\cdot\right\|$, $\left| \cdot\right|$ and $(\cdot)^\dag$. The complex Gaussian random variable with mean $ \mu $ and variance $ \sigma^2$ is denoted  by $\mathcal{CN}(\mu ,\sigma^2 )$. $\mathbb{E}[X]$ denotes the expectation of $X$, $\text{Tr} (\textbf{A})$ the trace of the square matrix $\textbf{A}$. 
	
\section{System Model and Secrecy Rate Optimization}
\label{sec:FJbackground}

\begin{figure*}[!]   
\centering
\includegraphics[width=0.68\linewidth]{./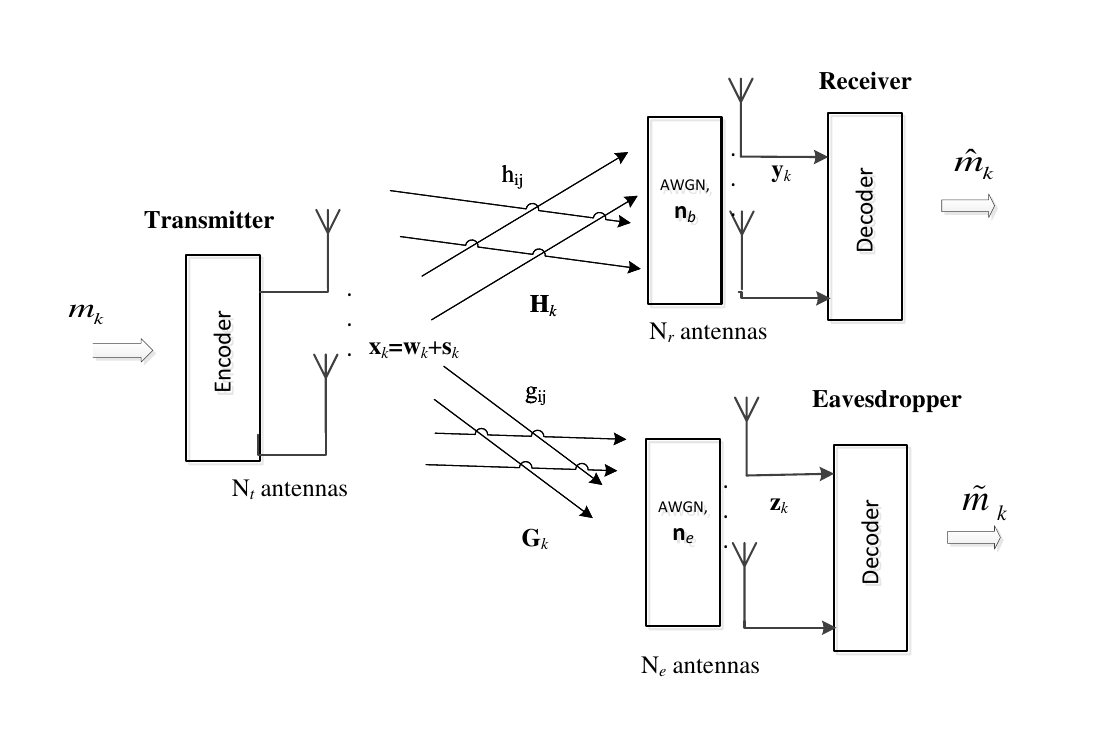}
\caption{Illustration of MIMO FJ-based system model with Tx, Rx, and eavesdropper.}
\label{fig: MIMO_FJ_Conven}
\end{figure*}

\begin{figure*}[!]   
    \centering
    \includegraphics[width=0.8\linewidth]{./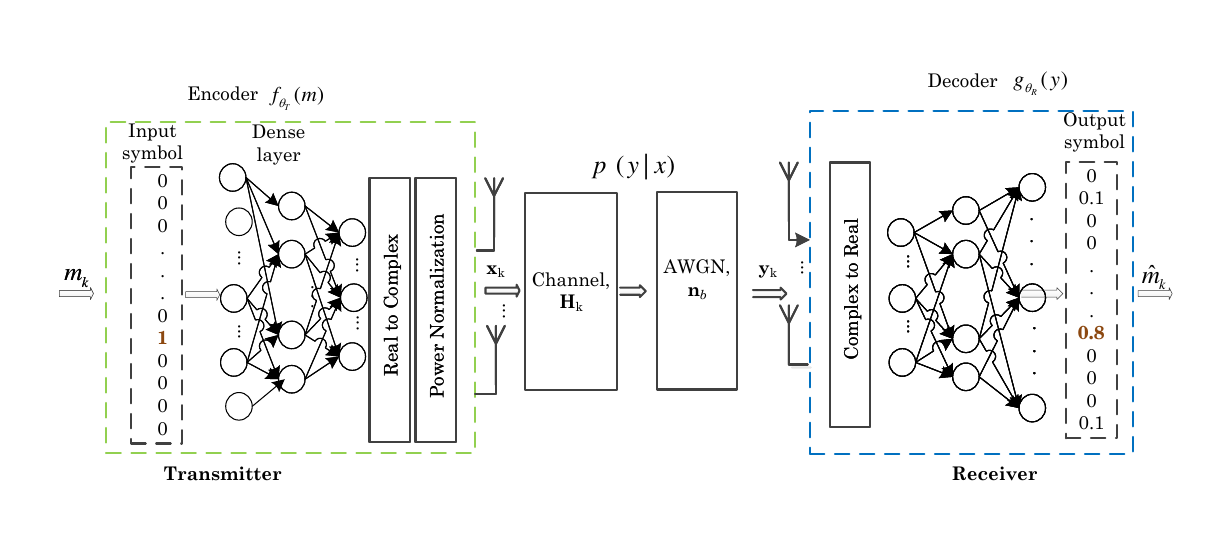}
    \caption{The AE-based communication with E2E learning.}
    \label{fig:AE_based_comms}
\end{figure*} 

\subsection{Background on MIMO-FJ}
\subsubsection{System Model}
The standard MIMO FJ model is depicted in Fig. \ref{fig: MIMO_FJ_Conven}, where the Tx, the Rx, and the eavesdropper are equipped with $N_t$, $N_r$ and $N_e$ antennas, respectively~\cite{goel2008guaranteeing}. Channel matrices at time slot $k$ for the Tx-Rx and Tx-eavesdropper links are denoted as $\mathbf {H}_k \in \mathbb{C}^{N_t \text {x} N_r}$ and $\mathbf {G}_k\in \mathbb{C}^{N_t \text {x} N_e}$.
Each element of $\mathbf {H}_k$ and $\mathbf {G}_k$ is assumed to be known, i.i.d., and unchanged over a block of transmit symbols. Let $\mathbf{s}_k$ be the desired message to the Rx. To cancel interference caused by FJ at the Rx, the FJ signal $\mathbf{w}_k$ is designed to satisfy $ {\mathbf {H}_k^\dag \mathbf{w}_k = 0}$. Given the transmitted signal $\mathbf{x}_k = \mathbf{s}_k + \mathbf{w}_k$, the received signals at the Rx and the eavesdropper can be expressed as follows:
\begin{align}
    \label{eq:ABE_Rela}
    \mathbf{y}_k &= \mathbf {H}_k^\dag \mathbf{x}_k+\mathbf{n}_b 
    = \mathbf {H}_k^\dag \mathbf{s}_k + \mathbf {H}_k^\dag \mathbf{w}_k+\mathbf{n}_b,\\
    \mathbf{z}_k &= \mathbf {G}_k^\dag \mathbf{s}_k + \mathbf {G}_k^\dag \mathbf{w}_k+\mathbf{n}_e
\end{align}
where $\mathbf{n}_b\sim\mathcal{CN}(0,\sigma^2_b \mathbf{}{I})$ and $\mathbf{n}_e\sim\mathcal{CN}(0,\sigma^2_e \mathbf{}{I})$ are the additive white Gaussian noise (AWGN) at the Rx and the eavesdropper. FJ signal $\mathbf{w}_k$ is designed as $\mathbf{w}_k=\mathbf{Z}_k\mathbf{v}_k$, where the precoding matrix $\mathbf{Z}_k$ is the orthogonal basis of the nullspace of $\mathbf {H}_k ^ \dag$. The elements of $\mathbf{v}_k$ are i.i.d. complex Gaussian random variables with variance $\sigma_v^2$. The noise covariance at the eavesdropper then can be calculated as follows:
	\begin{align}
        \label{eq:Cov_noise_Eve}
        \mathbf{K}_k = (\mathbf {G}_k ^\dag\mathbf {Z}_k ^\dag\mathbf {Z}_k \mathbf{G}_k)\sigma_v^2 + \mathbf I_{N_e}\sigma_e^2,
	\end{align}
where $\textbf{I}_{i}$ is an $i \times i$ identity matrix.  
\subsubsection{The Secrecy Rate Optimization Problem} 
Given $R_{AB}$ and $R_{AE}$ as the legitimate and illegitimate channel rates, the secrecy rate $R_k^s$ (nats/s/Hz) is given by
	\begin{align}   
		R_k^s &\triangleq~ \left[R_{AB} - R_{AE}\right]_+ 
		= \left[ \log (1 + \operatorname{SINR}_\text{B}) - \log (1 + \operatorname{SINR}_\text{E})\right]_+ \notag \\
		&= \left[\log \det\big(\mathbf{I} + \mathbf {H}_k^\dag \mathbf {Q}_s\mathbf {H}_k\big) 
         - \log\frac{{{{\det \big(\mathbf {K}_k+\mathbf {G}_k^\dag\mathbf {Q}_s\mathbf {G}_k}\big)}}}{{{\det(\mathbf {K}_k)}} }\right]_+
        \label{eq:MIMOFJ}
	\end{align}
where $[x]_+ = \max(0,x)$, $\mathbf{Q}_s=\mathbf{E}[\mathbf{s}_k\mathbf{s}_k^\dag]$. Since the eavesdropper's CSI is unavailable at the Tx, we focus on maximizing the first term in~\eqref{eq:MIMOFJ}  by using singular value decomposition (SVD). Specifically, the legitimate channel $\mathbf {H}_k$ is first decomposed as
\begin{align}   
    \mathbf{H}_k ^ \dag &= \mathbf {U}_k \mathbf {\Gamma}_k \mathbf {V}_k^\dag,
\end{align}
where $\mathbf {U}_k \in \mathbb{C}^{N_r \text {x} N_r}$, $\mathbf {V}_k \in \mathbb{C}^{N_t \text {x} N_t}$, and $\mathbf {\Gamma}_k \in \mathbb{C}^{N_t \text {x} N_t}$. After precoding $\mathbf {r}_k=\mathbf {V}_k^\dag\mathbf {s}_k$, the received signal $\mathbf {y}_k$ is multiplied with $\mathbf{U}_k^\dag$. Then, the received signal in \eqref{eq:ABE_Rela} can be equivalently to
\begin{align}   
    \mathbf{\tilde {y}}_k &= \mathbf {\Gamma}_k^\dag \mathbf{r}_k +\mathbf{\tilde{n}}_b,
\end{align}
where $\mathbf{\tilde {y}}_k=\mathbf {U}_k^\dag\mathbf {y}_k$. We denote $P$ is the total transmit power, $P_{\mathrm{info}}$ is the power for transmitting information signal, and $\mathbf {Q}_r$ as the precoded transmit covariance. Then $\mathbf {Q}_r$ is designed as

Then $\mathbf{Q}_r$ is designed as
\begin{align}
\mathbf{Q}_r = \mathbb{E} \left[ \mathbf{r}_k \mathbf{r}_k^{\dagger} \right] 
= \operatorname{diag}(\sigma_{r,1}^2, \sigma_{r,2}^2, \dots, \sigma_{r,N_t}^2),
\end{align}

where $\sigma_{r,i}^2$ is derived by the water filling solution with power constraint $P_{\mathrm{info}}\leq {P}$, corresponding to the $N_t$ largest singular values of $\mathbf{H}_k ^ \dag$.  Thus, the secrecy rate $R_k^s$ is re-expressed as
\begin{align}   
    R_k^s= \left[ \log \det (\mathbf{I} + {\mathbf {\Gamma}_k\mathbf {Q}_r\mathbf {\Gamma}_k^\dag}) 
    - \log\frac{\det (\mathbf {K}_k+\mathbf {F}_k)}{\det(\mathbf {K}_k)} \right]_+,
    \label{eq:MIMOFJ_Op}
\end{align}
where $\mathbf {F}_k=\mathbf {G}_k^\dag \mathbf {V}_k^\dag \mathbf {Q}_r \mathbf {V}_k \mathbf {G}_k$.
Since $R_k^s$ is random, the average secrecy rate over a number of channel realizations $\bar{R}$ will be used. Our objective is to maximize the average secrecy rate subject to the power constraint at the Tx, which is mathematically formulated as follows: 
\begin{equation}
\label{eq:Av_Cs_MIMO}
    \begin{aligned} 
     \bar{R}~\dot= \max_{\text{Tr}(\mathbb{E}[\mathbf{x}_k\mathbf{x}_k^\dag]) \le {P}} \mathbb{E} \left[\log \det (\mathbf{I} + {\mathbf {\Gamma}_k\mathbf {Q}_r\mathbf {\Gamma}_k^\dag}) - \log\frac{\det (\mathbf {K}_k+\mathbf {F}_k)}{\det(\mathbf {K}_k)} \right].
   \end{aligned}
\end{equation}
The power constraint $\text{Tr}(\mathbb{E}[\mathbf{x}_k\mathbf{x}_k^\dag]) \le {P}$ can be rewritten as $ \text{Tr}\big(\mathbf {V}_k^\dag \mathbf{Q}_r\mathbf{V}_k+N_{\mathrm{FJ}}\sigma_v^2 {\mathbf{I}_k}_{N_t} \big)\le P$, where ${N}_{\mathrm{FJ}}$ denotes the number of dimensions used for transmitting FJ signals.

\subsubsection{Channel Estimation Error Model}
The previous part considers the case of perfect CSI. Regarding ImCSI, due to channel estimation error, we choose a common ImCSI model~\cite{mukherjee2010robust}:
\begin{align}
    {{\rm{\tilde {\mathbf{H}}}}_k}{\rm{ = }}{{\rm{\mathbf{H}}}_k} + \Delta {{\rm{\mathbf{H}}}_k},
\end{align}
where the channel estimation error $\Delta {\mathbf{H}_k}$ is modeled as an i.i.d. complex Gaussian and circularly symmetric random matrix with the covariance ${\mathbf{C}_{\Delta \mathbf{H}_k}} \sim \mathcal{CN}({\mathbf{0}},\rho_e^2{\mathbf{I}}_{N_r})$. 
Then the secrecy rate in~\eqref{eq:MIMOFJ} becomes
\begin{align}
    \overline R_k^s \dot= \max_{\text{Tr}(\mathbb{E}[\mathbf{x}_k^\dag\mathbf{x}_k]) \le {P}} \mathbb{E}&\Big[\log \frac{{\det ({\bf{I}} + {{({{\bf{H}}_k} + \Delta {{\bf{H}}_k})}^\dag }{{\bf{Q}}_s}({{\bf{H}}_{\bf{k}}} + {\bf{\Delta }}{{\bf{H}}_{\bf{k}}})}}{{\det ({{\bf{D}}_k})}} \notag\\  &-\log\frac{{{{\det (\mathbf {K}_k+\mathbf {G}_k^\dag\mathbf {Q}_s\mathbf {G}_k})}}}{{{\det(\mathbf {K}_k)}} } \Big], \label{eq:MIMOFJ-ImCSI}
\end{align}
where $\mathbf{D}_k = (\mathbf {\Delta \mathbf{H}_k ^\dag\mathbf {Z}_k ^\dag\mathbf {Z}_k \mathbf{\Delta \mathbf{H}}_k)\{sigma_v^2 + \mathbf I_{N_r}\sigma_b^2}$ is the noise covariance at the receiver. The reason is the FJ signals are not completely cancelled out at the receiver due to channel error.
\subsection{AE-based MIMO Communication}
The AE-based communication is illustrated in Fig. \ref{fig:AE_based_comms} \cite{o2017introduction}. The model aims to reconstruct the input messages at time $k$,  $m_k\in \mathcal{M} = \{ 1,2,\ldots, M\}$, by minimizing the reconstruction error of the inputs or transmitted signals.
The message $m_k$ is generated uniformly and then embedded into the transmitted symbol $\mathbf{s}_k$. Then, it is sent through the dense layers to generate the transmit signal $\mathbf{x}_k$. The batch normalization layer performs the power constraint. The decoder includes the complex to real conversion layer, dense layers, and the last layer using the Softmax activation function. The output is the probability distribution $ \hat{\mathbbm{1}}_m \in (0,1)^{\operatorname{card}(M)} $ over all messages ($\operatorname{card}$ denotes cardinality). Then, the maximum likelihood (ML) is used to estimate the transmitted signal and optimize the construction error by using the cross-entropy loss function ~\cite{o2017introduction,bengio2017deep}. Hence, the decoded message $\hat{m}_k$ will be the index of the element of $\hat{\mathbbm{1}}_m $ with the highest value. 

The training samples in our model include transmitting messages as an input and the encoder, labels at the decoder, channel realizations and AWGN at both legitimate and illegitimate channels. Firstly, the transmitted messages consist of $M$ classes of integers and $L$ labelled instances, denoted as $\mathbf{m}_k$ and $\mathbf{l}_k$, respectively, where $\mathbf{l}_k \in M = \{1, \ldots, M\}$ is a class label of input $\mathbf{m}_k$. Secondly, the channel samples $\textbf{H}_k \sim \mathcal{CN}^{{N_t} \times {N_R}}$, contains the elements $h_{ij} = a_{ij} + jb_{ij}$, where $a_{ij}$, $b_{ij} \sim \mathcal{CN}(0,1)$. This is applied to the illegitimate channel, AWGN at the receiver and eavesdropper, $\textbf{G}_k$, $\mathbf{n}_b$, and $\mathbf{n}_e$, respectively. Since the AE and MINE-based FJ methods operate only with real parameters, the complex samples are parameterized to produce real values as follows:
\begin{align}
&{{\rm{\hat {\mathbf{x}}}}_k}{\rm{ = }}\left[ {\begin{array}{*{20}{c}}
{{{\Re}}({{\rm{\mathbf{x}}}_k})}\\
{{\Im}({{\rm{\mathbf{x}}}_k})}
\end{array}} \right],~{{\rm{\hat {\mathbf{y}}}}_k}{\rm{ = }}\left[ {\begin{array}{*{20}{c}}
{{{\Re}}({{\rm{\mathbf{y}}}_k})}\\
{{\Im}({{\rm{\mathbf{y}}}_k})}
\end{array}} \right],~
{{\rm{\hat {\mathbf{n}}}}_b}{\rm{ = }}\left[ {\begin{array}{*{20}{c}}
{{{\Re}}({{\rm{\mathbf{n}}}_b})} \notag \\
{{\Im}({{\rm{\mathbf{n}}}_b})}
\end{array}} \right],\\&{\rm{\hat {\mathbf{H}}_k = }}\left[ {\begin{array}{*{20}{c}}
{{\Re}({\rm{\mathbf {H}_k}})}&{ - {\Im}({\rm{\mathbf {H}_k}})}\\
{{\Im}({\rm{\mathbf {H}_k}})}&{{\Re}({\rm{\mathbf {H}_k}})}
\end{array}} \right],
{{\rm{\hat {\mathbf{n}}}}_e}{\rm{ = }}\left[ {\begin{array}{*{20}{c}}
{{{\Re}}({{\rm{\mathbf{n}}}_e})} \\
{{\Im}({{\rm{\mathbf{n}}}_e})}
\end{array}} \right],\\&{\rm{\hat {\mathbf{G}}_k = }}\left[ {\begin{array}{*{20}{c}}
{{\Re}({\rm{\mathbf {G}_k}})}&{ - {\Im}({\rm{\mathbf {G}_k}})}\\
{{\Im}({\rm{\mathbf {G}_k}})}&{{\Re}({\rm{\mathbf {G}_k}})}
\end{array}} \right] \notag,
\end{align}
where ${\Re}(A)$ and ${\Im}(A)$ are the real and imagine part of A, respectively.

\section{AE-based Friendly Jamming}
\label{sec:AEFJ}
 
In this section, we utilize the AE-based communication model outlined in Fig.~\ref{fig:AE_based_comms} to develop the AE-based FJ approach. The E2E learning scheme aims to achieve two objectives: maximizing the secrecy rate and minimizing the BLER at the receiver. The proposed AE-based MIMO FJ communication and security framework is illustrated in Fig.~\ref{fig:AEFJ}.  
The FJ signal is generated via the FJ generator layer and injected into the transmitted signals $\textbf{x}_k$. 
Regarding channel inputs, For channel inputs, we use estimated channel realizations $\mathbf{\hat{H}}_k$ at both the transmitter and receiver, which are based on outdated data that may include errors $\Delta \mathbf{H}_k$.
During the testing phase, the instantaneous channel realizations differ from those encountered in training but maintain the same distribution.

The proposed AE-based MIMO FJ communication and security scheme is illustrated in Fig.~\ref{fig:AEFJ}. 
The FJ signal is generated via the FJ generator layer and injected into the transmitted signals $\mathbf{x}_k$. 
    \begin{figure*}[t]
        \centering
        \includegraphics[width=0.75\linewidth]{./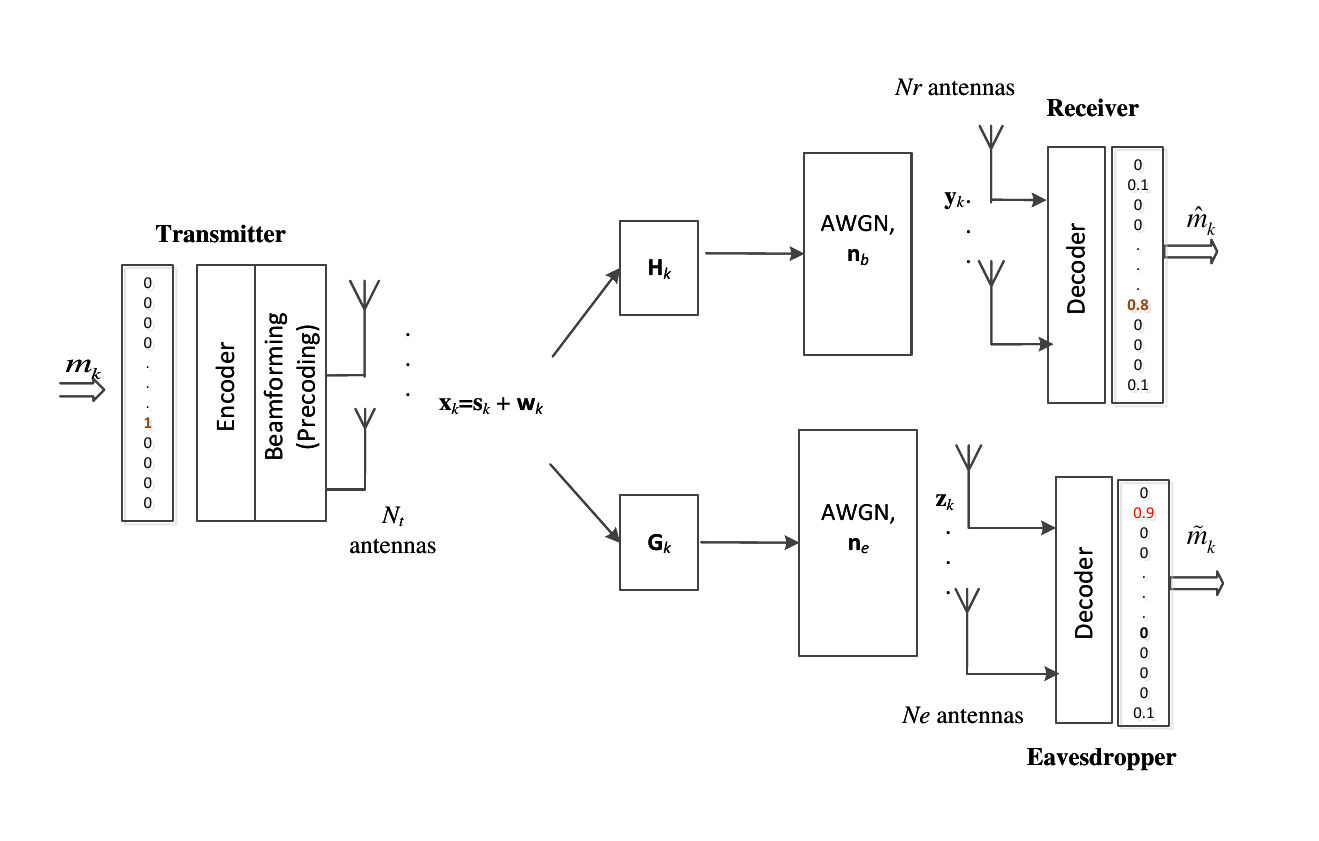}
        \caption{AE-based MIMO FJ.}
        \label{fig:AEFJ}
    \end{figure*}
A straightforward approach to optimize the secrecy rate is to maximize the MI difference between the Tx and the Rx and design an FJ signal to degrade the eavesdropper channel. Next, we will detail this approach. 
    


\subsection{AE-based MI Optimization}
\label{subsection:Rethinking}
It is worth noting that the communication and security tasks are transformed into the classification with the AE, Softmax activation function, and cross-entropy based security loss function \cite{o2017introduction}. 
The conventional work often assumes that the channel $\mathbf{H}_k$ is perfectly known, based on which the average secrecy rate is achieved by maximizing the channel rate or $I(A, B)$ of the legitimate channel and the precoded FJ signals $\mathbf{w}_k$ are then designed to lie in the nullspace of $\mathbf{H}_k$ so as to degrade the eavesdropper channel with the highest probability. Regarding ImCSI, the FJ signals may not completely lie in the true nullspace of $\mathbf{H}_k$. Thus, finding the optimal FJ signals that can be completely canceled out at the Rx and degrade the decoding capability of the eavesdropper is a nontrivial task; even the statistical information of channel estimation error is available at the Tx. 
Since AE-based communication can be seen as a classifier of transmitted messages and symbols, the MI between the input and the output on the legitimate channel, $I(A, B)$, can be maximized using Softmax and cross-entropy loss functions. 
The transmitted message and its label are denoted by the random variables $\mathbf {m}$ and $\mathbf{l}$, respectively. The training data consists of $M$ classes and $N$ labeled instances denoted as $\{(m_k, l_k)\}_{k=1}^N$, where $l_k \in \mathcal{M} = \{1, ..., M\}$ is a class label of input $m_k$. For the task of reconstruction, let $l_k$ be equal to $m_k$. 
The AE is now a neural classifier $AE (\mathbf {m},  \mathbf {l})$ parameterized by $\phi$. It can be seen that $\mathbf {m}$ and $\mathbf {x}$ have the same probability distribution $p(\mathbf {m})$, similarly with $p(\mathbf {l})$ and $\mathbf {y}$. Thus, the MI between $\mathbf {x}$ and $\mathbf {y}$ can be expressed as: 
    \begin{align}
    I(\mathbf {x}, \mathbf {y}) = I (\mathbf {m}, \mathbf {l}) = \sum_{m_k \in \mathcal{M}} \sum_{l_k \in \mathcal{M}} p(\mathbf {m}, \mathbf {l}) \log \frac{p(\mathbf {m}, \mathbf {l})}{p(\mathbf {m})p(\mathbf {l})}.
    \end{align}
    The MI $I(\mathbf {m}),p(\mathbf {l})$ can be rewritten as follows:
    \begin{align}
        I(\mathbf {m}),p(\mathbf {l}) = {\mathbb{E}_{\mathbf {m}),p(\mathbf {l}}}\log \frac{{p(\mathbf {l}|\mathbf {m})}}{{p(\mathbf {l})}},
    \end{align}
where $\mathbb{E}_{\mathbf {m}, \mathbf {l}}$ is the expectation with respect to the joint distribution $p(\mathbf {m}, \mathbf {l})$. Since  $p(\mathbf {m}, \mathbf {l})$ is not available, we use the variational distribution $q$ to estimate $I(p(\mathbf {m}, \mathbf {l}))$. This can be done as follows~\cite{agakov2004algorithm}:
\begin{align}
    I(p(\mathbf {m}, \mathbf {l})) &= \mathbb{E}_{p(\mathbf {m}, \mathbf {l})} \log \frac{p(\mathbf {l} | \mathbf {m})}{p(\mathbf {l})} 
        = \mathbb{E}_{p(\mathbf {m}, \mathbf {l})} \log \frac{q(\mathbf {l}|\mathbf {m})p(\mathbf {l}|\mathbf {m})}{p(\mathbf {l})q(\mathbf {l}|\mathbf {m})}   \notag \\ 
        &= \mathbb{E}_{\mathbf {m}, \mathbf {l}} \log \frac{q(\mathbf {l}|\mathbf {m})}{p(\mathbf {l})} + \mathbb{E}_{\mathbf {m}, \mathbf {l}} \log \frac{p(\mathbf {m}, \mathbf {l})}{q(\mathbf {m}, \mathbf {l})} - \mathbb{E}_{\mathbf {m}} \log \frac{p(\mathbf {m})}{q(\mathbf {m})} \notag \\ 
        &\geq \mathbb{E}_{\mathbf {m}, \mathbf {l}}\left[ \log \frac{q(\mathbf {m}, \mathbf {l})}{p(\mathbf {m})P(\mathbf {l})}\right].     \label{eq:ieq17}
\end{align}
The inequality \eqref{eq:ieq17} holds since the KL divergence maintains non-negativity. The lower bound is tight when $q(\mathbf {m}, \mathbf {l})$ converges to $p(\mathbf {m}, \mathbf {l})$. However, $q(\mathbf {m}, \mathbf {l})$ has to satisfy the probability constraint axioms such as non-negativity, symmetry, and summary of all probability values equal to one. By~\cite{pmlr-v108-mcallester20a}, $q(\mathbf {m}, \mathbf {l})$ is represented by an unconstrained function $g(\gamma)$, given as
     \begin{align}
    q(\mathbf {m}, \mathbf {l}) = \frac{p(\mathbf {m})p(\mathbf {l})}{\mathbb{E}_{l_i \sim p(\mathbf {l})}\exp f_{\theta}(\mathbf {m}, l_i)} \exp g_{\gamma}(\mathbf {m}, \mathbf {l}).
     \end{align}
 Given a neural network $AE(\mathbf {m}, \mathbf {l})$, the Softmax activation function at the last layer is $ S(AE(\mathbf {m}, \mathbf {l})): \mathbb{R}^M \rightarrow \mathbb{R}^M$ and  defined as:
    \begin{align}
    \label{eq:soft}
    S(AE(\mathbf {m}, \mathbf {l})) = \frac{\exp AE(\mathbf {m}, \mathbf {l})}{\sum_{l_i=1}^{M} \exp n(M)_{l_i}}.
    \end{align}
    Then, the expected cross-entropy $L_{CE}$ loss is given as
    \begin{align}
    L_{CE} = -\mathbb{E}_{\mathbf {m}, \mathbf {l}} \big\{AE(\mathbf {m}, \mathbf {l}) - \log\sum_{l_i=1}^{M} \exp AE(\mathbf {m}, \mathbf {l})\big\}
    \end{align}
where the expectation is taken over the joint distribution $p(\mathbf {m}, \mathbf {l})$. The following theorem presents the relationship between the estimated MI and the AE classifier.
\begin{theorem}
\label{theo: AE_MI_estimator}
Let $g_{\gamma}(\mathbf {m}, \mathbf {l}) = AE(\mathbf {m}, \mathbf {l})$. The infimum of the expected cross-entropy loss with output of Softmax activation function $S(AE(\mathbf {m}, \mathbf {l}))$ is equivalent to the MI between input and output variables $\mathbf {m}, \mathbf {l}$, respectively, up to constant $\log M$ under uniform label distribution.
\end{theorem}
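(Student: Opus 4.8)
The plan is to start from the variational lower bound \eqref{eq:ieq17} and substitute the explicit parameterization of $q(\mathbf{m}, \mathbf{l})$ in terms of the unconstrained network score $g_{\gamma} = AE$, then show that the resulting quantity is exactly $\log M - L_{CE}$ under a uniform label prior. Concretely, I would first plug $q(\mathbf{m}, \mathbf{l}) = \frac{p(\mathbf{m})p(\mathbf{l})}{\mathbb{E}_{l_i \sim p(\mathbf{l})}\exp AE(\mathbf{m}, l_i)} \exp AE(\mathbf{m}, \mathbf{l})$ (identifying $f_{\theta}$ with $g_{\gamma} = AE$) into the right-hand side of \eqref{eq:ieq17}. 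The prior factor $p(\mathbf{m})p(\mathbf{l})$ cancels against the denominator $p(\mathbf{m})p(\mathbf{l})$ inside the bound, so that $\log \frac{q(\mathbf{m}, \mathbf{l})}{p(\mathbf{m})p(\mathbf{l})} = AE(\mathbf{m}, \mathbf{l}) - \log \mathbb{E}_{l_i \sim p(\mathbf{l})}\exp AE(\mathbf{m}, l_i)$, leaving the bound in the form $\mathbb{E}_{\mathbf{m}, \mathbf{l}}\big[ AE(\mathbf{m}, \mathbf{l}) - \log \mathbb{E}_{l_i \sim p(\mathbf{l})}\exp AE(\mathbf{m}, l_i)\big]$.

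Next I would invoke the uniform-label assumption $p(\mathbf{l}) = 1/M$ to rewrite the partition term as $\mathbb{E}_{l_i \sim p(\mathbf{l})}\exp AE(\mathbf{m}, l_i) = \frac{1}{M}\sum_{l_i=1}^{M}\exp AE(\mathbf{m}, l_i)$. Extracting the factor $1/M$ contributes an additive $\log M$, and the bound becomes $\mathbb{E}_{\mathbf{m}, \mathbf{l}}\big[ AE(\mathbf{m}, \mathbf{l}) - \log\sum_{l_i=1}^{M}\exp AE(\mathbf{m}, l_i)\big] + \log M$. Comparing the bracketed expectation with the definition of $L_{CE}$, I would identify it as precisely $-L_{CE}$, so \eqref{eq:ieq17} reads $I(\mathbf{m}, \mathbf{l}) \geq \log M - L_{CE}$, equivalently $L_{CE} \geq \log M - I(\mathbf{m}, \mathbf{l})$.

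Finally I would close the bound to obtain the claimed equivalence. The sole inequality invoked in \eqref{eq:ieq17} is the non-negativity of the KL divergence between $q$ and the true joint $p$, whose gap vanishes exactly when $q(\mathbf{m}, \mathbf{l}) = p(\mathbf{m}, \mathbf{l})$. Since $q$ is expressed through the unconstrained score $g_{\gamma} = AE$, minimizing $L_{CE}$ over the network parameters drives $q$ toward $p$, and at the optimum the gap closes, yielding $\inf L_{CE} = \log M - I(\mathbf{m}, \mathbf{l})$. This is exactly the assertion that the infimum of the cross-entropy loss recovers the mutual information up to the additive constant $\log M$.

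The main obstacle I anticipate is the tightness step rather than the algebraic identification, which is essentially bookkeeping once the parameterization and the uniform prior are inserted. I must justify that the unconstrained Softmax score is expressive enough for $q$ to attain the true joint $p(\mathbf{m}, \mathbf{l})$ and that the infimum is genuinely approached there; this rests on the representational capacity of the network and on verifying that the stationarity condition of $L_{CE}$ with the Softmax output \eqref{eq:soft} corresponds to $q(\mathbf{m}, \mathbf{l}) = p(\mathbf{m}, \mathbf{l})$. I would therefore spend the bulk of the proof making the tightness of \eqref{eq:ieq17} rigorous, treating the reduction of the bound to $\log M - L_{CE}$ as the routine part.
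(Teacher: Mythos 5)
Your proposal is correct and follows essentially the same route as the paper's proof: substitute $g_{\gamma} = AE$ into the variational bound \eqref{eq:ieq17}, use the uniform label prior to pull out the additive $\log M$, and identify the remaining expected log-softmax term with $-L_{CE}$. The only difference is that you devote explicit attention to the tightness step (that the infimum is attained when $q(\mathbf{m},\mathbf{l})$ reaches $p(\mathbf{m},\mathbf{l})$), which the paper simply asserts without argument---a worthwhile strengthening rather than a different approach.
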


\begin{proof} 
Let $g_{\gamma}(\mathbf {m}, \mathbf {l}) = AE(\mathbf {m}, \mathbf {l})$, then the lower bound is
\begin{align}
\mathbb{E}_{\mathbf {m}, \mathbf {l}} \log \frac{\exp AE(\mathbf {m}, \mathbf {l})}{\sum_{l_i=1}^{\mathbf{m}} \exp AE(\mathbf {m}, \mathbf {l})_{l_i}}.
\end{align}
If the distribution of the label is uniform, we can rewrite
\begin{align}
\mathbb{E}_{\mathbf {m}, \mathbf {l}} &\log \frac{\exp AE(\mathbf {m}, \mathbf {l})}{\frac{1}{M}\sum_{l_i=1}^{M} \exp AE(\mathbf {m}, \mathbf {l})_{l_i}} \notag\\ 
    &= \mathbb{E}_{\mathbf {m}, \mathbf {l}} \log \frac{\exp AE(\mathbf {m}, \mathbf {l})}{\sum_{l_i=1}^{M} \exp AE(\mathbf {m}, \mathbf {l})_{l_i}} + \log M \notag\\ 
    &= - S(AE(\mathbf {m}, \mathbf {l})) + \log M,
\end{align}
which is equivalent to the negative expected cross-entropy loss up to constant $ \text{log} M$. Hence, the infimum of the expected cross-entropy equals the MI between input and output variables. 
\end{proof}

Therefore, the secrecy optimization can be transformed into the cross-entropy optimization between the transmitted and received signals. Next, we construct a new cross-entropy loss function to optimize the MIMO-AE channel parameters.

\subsection{Security Loss Function}
The term security loss function was first proposed in~\cite{fritschek2019deep}, which is a mixed loss cross-entropy function aiming to maximize the ML between the transmit and receive symbols at the Rx and increase the cross-entropy between the transmitted and the received symbols at the eavesdropper. However, this assumption may not be available in practical scenarios because eavesdroppers are passive and would never give feedback for the training process. 
    In our work, we propose the following cross-entropy loss function:
    \begin{align}
L= (1 - \alpha)H(p_A(\mathbf{s}_k),p_B (\mathbf{s}_k) + \alpha H(p_A(\mathbf{w}_k),p_B(\mathbf{w}_k))
        \label{eq:cost}
    \end{align}
where $p_A(s_k)$ and $ p_B(s_k)$ are the probability mass functions of the information signals, and $ p_A(w_k)$ and $ p_B(w_k)$ are the resulting probability mass functions of the FJ signals at the Tx and the Rx, respectively, $H(\cdot)$ denotes the cross-entropy, and $\alpha$ is a parameter for the security and communication rate trade-off. Hence, minimizing $H(p_A(s_k),p_B (s_k))$ or maximizing the output probability of symbol $s_{ik}$ will decrease the output probability of all other symbols at the Rx. In contrast, maximizing $H(p_A(s_k),p_E(s_k))$ forces the system to reduce the output probability of the symbol $s_{ik}$, resulting in a higher probability on other symbols $s_{jk}$, $i \# j$.
In other words, this scheme can minimize the MI $I(FJ, B)$ and maximize $I(A, B)$. As discussed, the AE classification task with softmax and cross-entropy loss function is equivalent to the maximum MI between the input and output. Hence, minimizing the loss function~\eqref{eq:cost} will reduce the effect of FJ on the Rx and maximize $I(A, B)$. By contrast, the eavesdropper channel will be degraded with a high probability since the eavesdropper does not have any information about the FJ signals. 

The power allocation through the training process plays a key role in balancing security and BLER. The FJ generation layer generates the FJ signal $\mathbf{w}_k$ and concatenates with the transmit signal $\mathbf{s}_k$ to generate $\mathbf{x}_k$. The $\mathbf{x}_k$ is normalized in the batch normalization layer, which enforces power constraints. The training process employs unsupervised learning with the loss function $L_{bf}=-R_k^s$. , ensuring that the optimization of 
$R_k^s$ adheres to these constraints. Further, the parameter $\alpha$ in the security loss function $L$ is designed to minimize the effects of the jamming signal $\textbf{w}_k$ on the receiver, helping to reduce interference and maintain the SNR at Bob. While $\alpha$ is not a direct balancing factor between the secrecy rate and BLER, it increases the likelihood that the FJ signal will impact the eavesdropper. This adaptive mechanism is particularly important when only statistical CSI is available at the transmitter, requiring reliance on probabilistic estimates rather than precise channel conditions between the transmitter and eavesdropper. This approach is more realistic, as the transmitter cannot access the eavesdropper's channel. Moreover, we have observed that fine-tuning the parameter $\alpha$  leads to better convergence and helps avoid overfitting during autoencoder training compared to models that do not include this parameter.
\subsection{FJ Generation}
\begin{figure}[!tb]
    \centering
    \includegraphics[width=1\linewidth]{./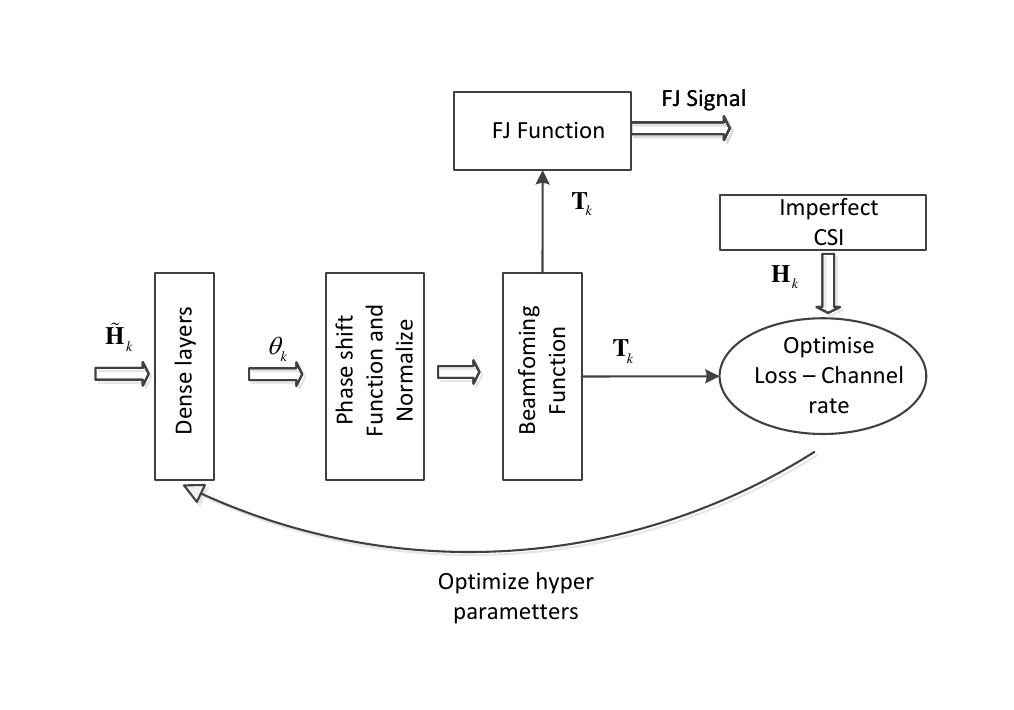}
    \centering
    \caption{Learning-based capacity driven FJ.}
    \label{fig:Beamformer}
\end{figure}

To generate FJ signals, we propose a learning-based capacity-driven FJ (LCD-FJ) model, illustrated in Fig.~\ref{fig:Beamformer}. The LCD-FJ structure consists of an embedding layer that converts complex values of the inputs into real values, fully connected layers, and three Lambda functions. The first Lambda function is used to encode the input to the azimuth. The second Lambda function is used to design the precoding matrix $\mathbf{T}_k$, and the data rate is then calculated. The training process aims to optimize the precoding matrix $\mathbf{T}_k$ by minimizing the loss function. As mentioned, the FJ signal is generated in the nullspace of $\mathbf{T}_k$, defined by the third Lambda function, so the FJ signal is as orthogonal as possible with the legitimate~channel.
The impact of FJ on eavesdropping suppression is significant. FJ degrades the eavesdropper’s SNR by strategically injecting noise into the transmission. In MIMO systems, beamforming techniques shape these noise signals to target the eavesdropper while preserving the quality of communication for the legitimate receiver. By projecting the noise onto the null space of the legitimate channel matrix, the noise minimally affects the legitimate receiver while significantly disrupting the eavesdropper’s reception. To address the challenges of imperfect CSI, our Learning-based Capacity-Driven FJ (LCD-FJ) model employs deep learning to optimize the FJ signal, ensuring it remains orthogonal to the legitimate channel. This minimizes interference and enhances communication security. Additionally, the AE-based FJ approach utilizes autoencoders to design FJ signals that maintain the quality of legitimate communication even in the presence of imperfect CSI. Finally, the MINE-based FJ approach, which functions without statistical CSI, maximizes the mutual information between the transmitter and receiver, providing robust security against eavesdropping by dynamically adapting to channel variations.

\section{Implementation}
Regarding the implementation details, the transmit messages, $\textbf{m}_k$, are discrete symbols representing data to be transmitted, generated uniformly and encoded into transmitted symbols $\mathbf{s}_k$. Since deep learning models operate on real numbers, a conversion occurs in the embedding layer, which separates the real and imaginary parts of complex symbols into real-valued vectors. The encoder includes three fully connected (dense) layers. The first layer expands the input symbols into a higher-dimensional space (e.g., expanding a vector of length 16 to 64) to capture more features. ReLU activation adds non-linearity. Batch normalization between layers stabilizes and accelerates training.

The final output of the encoder consists of the transmit symbols $\mathbf{x}_k$, which represent the encoded information symbols prepared for transmission over the MIMO system. These symbols include imaginary components. The re-parameterization process adapts these symbols for real-valued processing in the MIMO system. This process converts the complex symbols into a real-valued format by separating their real and imaginary parts and stacking them into a real-valued vector.
For instance, consider the original complex symbol $\mathbf{x}_k = \mathbf{s}_k + \mathbf{w}_k$, where $\mathbf{s}_k$ represents the information-bearing signal and $\mathbf{w}_k$ represents the Friendly Jamming (FJ) signal. The re-parameterized symbol, $\hat{\mathbf{x}}_k$, is formed by stacking the real and imaginary parts of $\mathbf{x}_k$ into a real-valued vector.
This conversion ensures that all complex parameters, such as $\mathbf{x}_k$, are transformed into real-valued components, making them suitable for real-valued operations within the MIMO system.

Decoding begins with the received signal $\mathbf{y}_k$, converted from complex to real values, mirroring the encoding process. The received signal is separated into real and imaginary parts and stacked as a real-valued vector for the decoder. The decoder, with three dense layers, processes the signal and reconstructs the original message. The first dense layer expands the input vector (e.g., from 64 to 256), enhancing the decoder's ability to interpret the signal. Batch normalization stabilizes training by normalizing inputs, reducing the risk of vanishing or exploding gradients, especially in noisy environments.

Subsequent dense layers reduce dimensionality back to the original message size. For example, after expanding to 256, layers reduce to 128 and then to the final output size, typically matching the original message length (e.g., 16). ReLU activation adds non-linearity to learn complex patterns. The decoder outputs the reconstructed message, aligning with the original transmitted data. In classification tasks, a Softmax function produces probabilities for each message symbol, while a linear function is used for regression.
The decoder minimizes reconstruction error, ensuring the decoded message closely matches the original. Then, the categorical cross-entropy loss function 
 is used to train the decoder to enhance both accuracy and reliability.

\section{MINE-based MIMO FJ}
In the previous section, we proposed the AEFJ method for secure communication, where the secrecy was optimized based on the E2E learning process with ImCSI. In this section, we propose a MINE-based FJ method to deal with the scenarios when even the SCSI is unavailable at the Tx.
\subsection{MINE-based MIMO Communication}
\label{subsection:MINE_Backgr}
The MINE concept proposed in~\cite{pmlr-v80-belghazi18a} estimates the MI between two random variables $X$ and $Y$ without knowing their distribution functions. In~\cite {fritschek2019deep}, the authors leveraged MINE for channel coding by estimating and maximizing the MI between the Tx and Rx symbols on the Gaussian single-input-singple-output (SISO) communication channel. In this work, we introduce the MINE-based MIMO communication as illustrated in  Fig.~\ref{fig:MINE_ENCODER}. The encoder and decoder resemble the AE-based communication. 
\begin{figure}[!tbh]
    \centerline{\includegraphics[width=1.0\linewidth]{./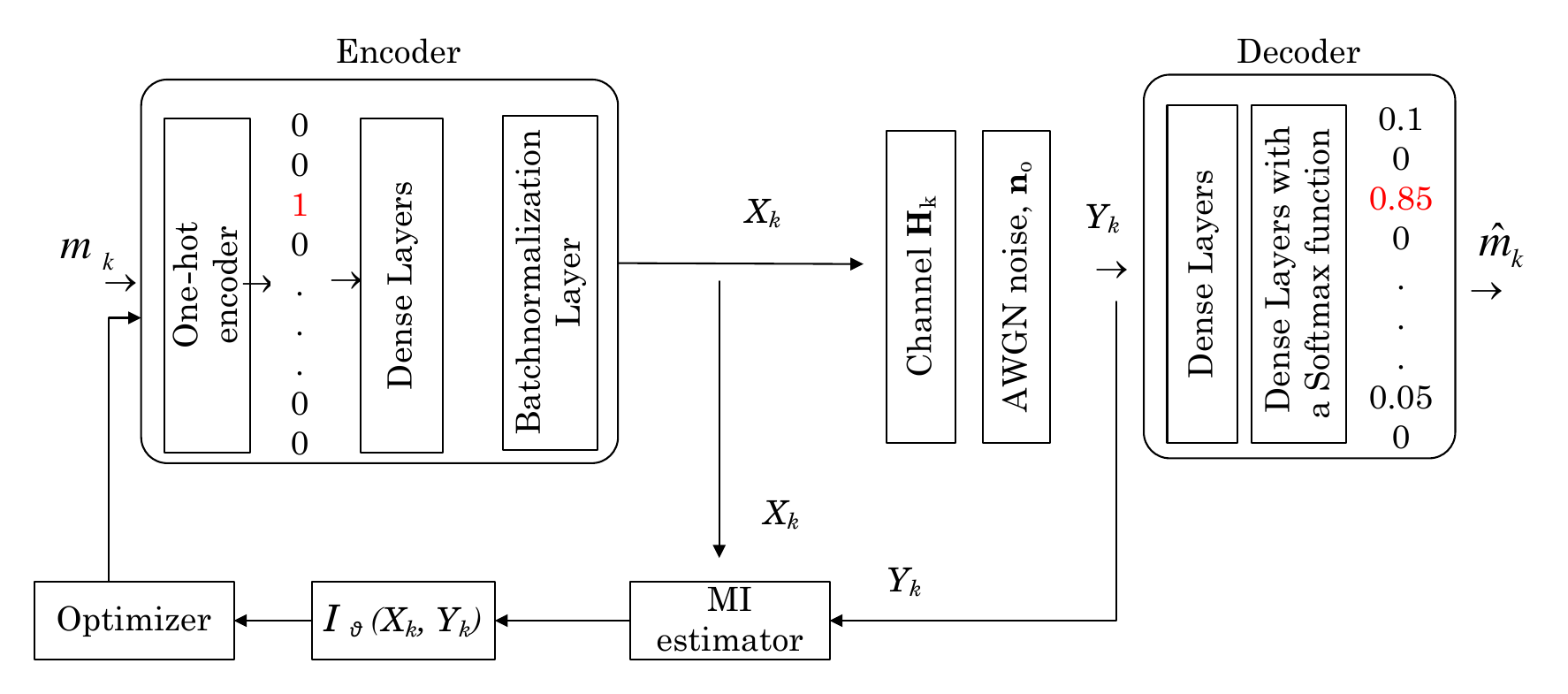}}
    \centering 
    \caption{The MINE-based MIMO channel.}
    \label{fig:MINE_ENCODER}
\end{figure} 

The MI between the channel input and output samples will be estimated and maximized by optimizing the weights of the encoder.
The MI between $X$ and $Y$ can be denoted as follows:
\begin{equation}
    I(X,Y)=D_{KL}(p_{XY})\parallel p_X \otimes p_Y),
\end{equation}  
where $D_{KL}$ is Kullback–Leibler divergence between the joint density $p_{XY}$ and the product of the marginal densities $p_X \otimes p_Y$. The Donsker-Varadhan representation \cite{pmlr-v80-belghazi18a} can then be applied to represent the KL divergence as
	\begin{equation}
		D_{KL}(p\parallel q) = \sup_{F_\Theta: \Omega  \to R} \mathbb{E}_p f - \log \mathbb{E}_q e^f,
		\label{eq:KL}
	\end{equation}
where the supremum is taken over all function classes $f$ such that the expectation is finite. Assume  $T_\theta (X_n, Y_n): X \times Y \to \mathbf{R}$ is a function in the class  function $f$, with parameters $\theta  \in \Theta $ in the family function $F_\Theta$. The transmit and receive symbols in one channel instance are denoted by $X_n$ and $Y_n$. When $p=p_{XY}$ and $q= {p_X} \otimes {p_Y}$ Then, $D_{KL}(p\parallel q)=I(X, Y)$ mutual information $I(X, Y)$ is given by
\begin{equation}
    I (X, Y) \ge \sup_{T_\theta \in F_\Theta} \mathbb{E}_p T_\theta (X_n, Y_n) - \log\mathbb{E}_q e^{T_\theta (X_n, Y_n)}.
\end{equation}
$I(X,Y)$ is estimated by its lower bound $I_\Theta(X,Y)$~\cite{pmlr-v80-belghazi18a}, known as the statistical MI between $X$ and $Y$ and is given by
\begin{equation}
	I_\Theta(X,Y)  =  \sup_{\theta  \in \Theta }\mathbb{E}_{p_{XY}} T_\theta (X_n, Y_n)  - \log\mathbb{E}_{{p_X} \otimes {p_Y}} e^{T_\theta (X_n, Y_n)}.
\label{eq:MI2}
\end{equation}

\begin{figure}[!tb]	
    \centering
    \includegraphics[width=.9\linewidth]{./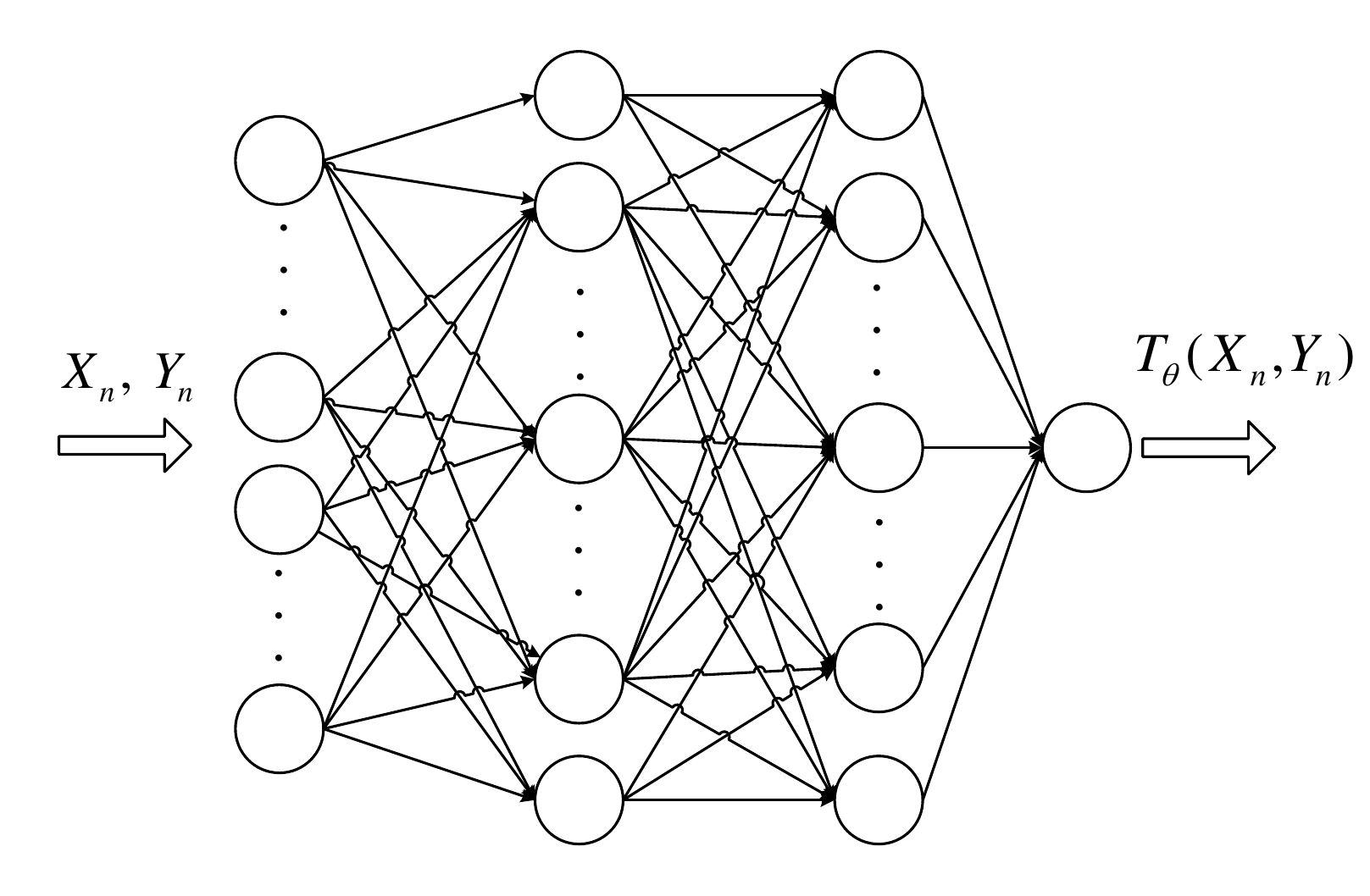}
    \caption{MI estimator. The inputs are the samples of Tx and Rx signals at the Rx and eavesdropper. The output is the estimated MI on the legitimate and illegitimate channels.}
    \label{fig:MINE}
\end{figure}
The estimator $T_\theta (X_n, Y_n)$ model contains two fully connected hidden layers with the ReLU activation function and an output node, which is depicted in Fig.~\ref{fig:MINE}.
The inputs are the samples of the Tx signals that follow the joint density $p_{XY}$ or $p_{XZ}$, and marginal densities $p_X$, $p_Y$ or $p_Z$, respectively. We then approximate the expectations by using the sample average method. The marginal $p_X$ and $p_Y$ can be derived by using the sampled average method, where the samples are shuffled from the joint distribution in the batch dimension~\cite {pmlr-v80-belghazi18a}. The estimated statistical MI between the Tx and the Rx is then can be expressed as follows:
\begin{align}
    \label {eq:SMI_lb}
    I_\Theta(X_n,Y_n) = \frac{1}{N} \sum_{i = 1}^N T_{\theta}(\mathbf{x}_i^n, \mathbf{y}_i^n)  - \log \frac{1}{N} \sum_{i = 1}^N e^{T_{\theta}(\mathbf{x}_i^n,\mathbf{\bar y}_i^n)},
\end{align}
where $\mathbf{x}_i^n$ and $\mathbf{\bar y}_i^n$ are the elements in the sets of samples $X_n$ and $Y_n$, respectively. The optimization process is performed alternatively between the MINE network and the encoder with the weights $\theta$ and $\phi$, respectively. This training process also optimizes the block error rate when the decoder is pre-trained separately from the encoder, making the training more flexible.

\subsection{MINE-based MIMO FJ Scheme}
\label{subsubsec:MINE-based FJ}

The main advantage of MINE is that it can estimate and maximize the MI between two random variables without the need for joint probability distribution between them. The goal here is to maximize the MI between the Tx and the Rx, $I(A, B)$, while minimizing the MI between the Tx and the eavesdropper, $I(A, E)$. The MINE-based FJ is presented in Algorithm~\ref{MINE_FJ_Algo}. We assume that the eavesdropper has the full knowledge of CSI. 
\begin{algorithm}[!tb]
\caption{MINE-based MIMO Friendly Jamming}
\label{MINE_FJ_Algo}
\begin{algorithmic}[1]
    \State At time $k$: Generate $\mathbf{w}_k=\mathbf{Z}_k\mathbf{v}_k$, orthogonal to $\mathbf {H}_k^\dag$
    \State Tx transmits $\mathbf {x}_k=\mathbf {s}_k  + \mathbf {w}_k$
    \State Rx receives $\mathbf {y}_k$
    \For{\textit{i=1 to Iteration}}
    \State Maximize estimated MI $I_{AB}^{(i)}$
    \State Save estimated $I_{AB}^{(i)}$
    \State Load $I_{AB}^{(i)}$ as loss value to optimize encoder weights by performing gradient descent steps
    \State Set $i=i+1$
        \If {$I_{AB}^{(i+1)} \leq I_{AB}^{(i)}$}
        \State break
        \EndIf
    \EndFor		
\end{algorithmic}
\end{algorithm}
The receive symbol in channel instance $n$ is denoted as $Z_n$. Following \eqref{eq:SMI_lb}, the estimated MI between $X_n$ and $Z_n$ can be expressed as
\begin{align}
    I_\Theta(X_n,Z_n) = \frac{1}{N}\sum_{i = 1}^N T_{\theta_2}(\mathbf{x}_i^n, \mathbf{ z}_i^n)  - \log \frac{1}{N}\sum_{i = 1}^N e^{T_{\theta_2} (\mathbf{x}_i^n, \mathbf{\bar z}_i^n)},
\end{align}
where $\theta_2$ is the parameters of the neural estimator at the eavesdropper.
The encoder and MI estimator structures, i.e., $I_\Theta$ network, remain unchanged as described in Section~\ref{subsection:MINE_Backgr}. The MINE-based FJ scheme will combine the MI estimator and encoder with the injected FJ signals, as illustrated in Fig.~\ref{fig:MINE-FJ}. The model takes the input as the sampled signals $X_n$ and $Y_n$ to estimate $I_\Theta (A, B)$. The value of $I_\Theta (A, B)$ is fed back to the encoder for adjusting the hyperparameters, weights, and biases. The same process is repeated ultill $I_\Theta (A, B)$ is~maximized.
\begin{figure}[!tb]
    \centerline{\includegraphics[width=\linewidth]{./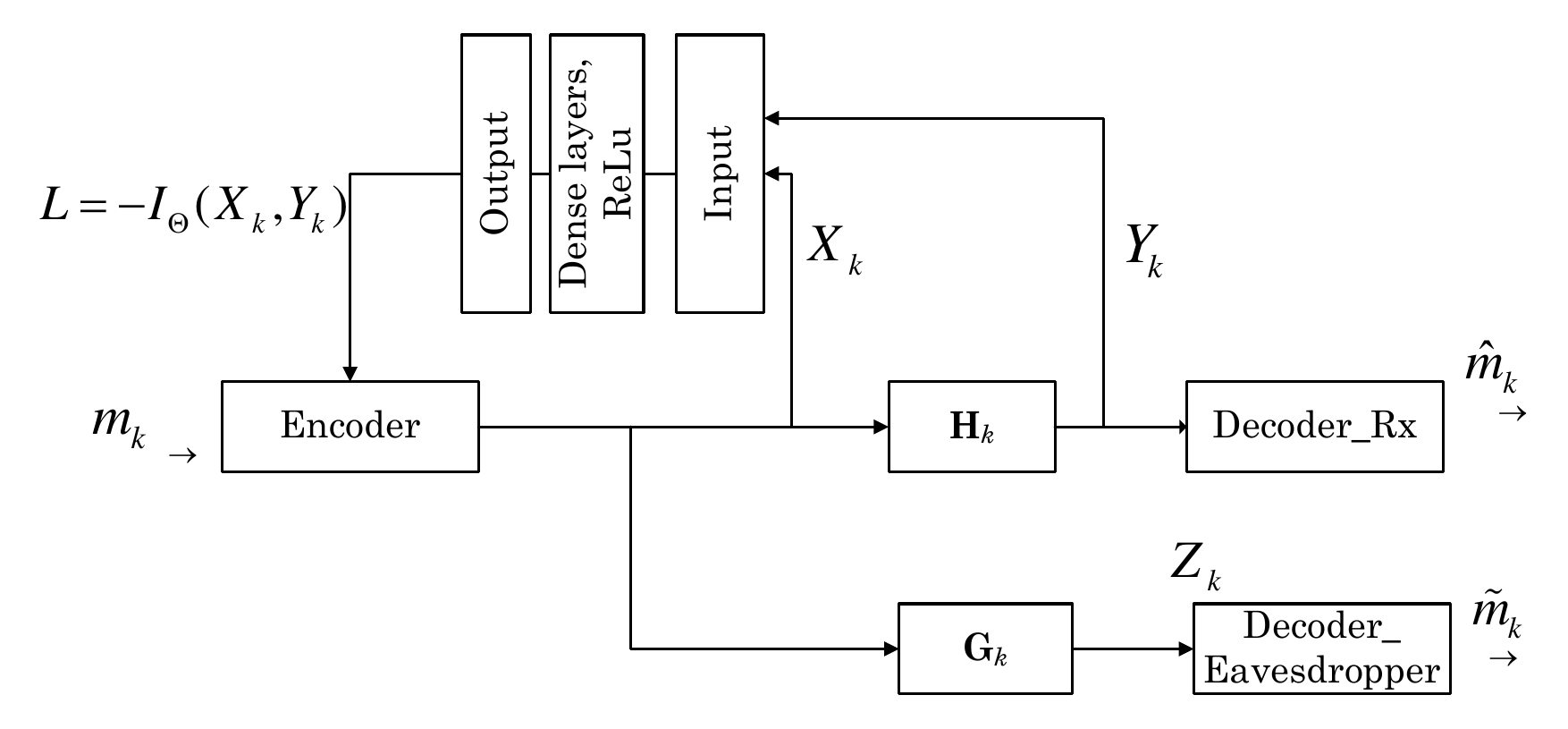}}
    \centering
    \caption{MINE-based FJ.}
    \label{fig:MINE-FJ}
\end{figure}    

To fulfill both security and transmission reliability objectives, we introduce the following novel security loss function:
    \begin{equation}
        L_\text{MINE} = \beta I_\Theta(X_n,Y_n) + (1 - \beta )I_\Theta(X_n,Z_n),
    \end{equation}
where $\beta$ is a coefficient that controls the trade-off between communication rate and security. Then, the loss function can be further written as follows:
\begin{align}
    L_\text{MINE} &= \frac{\beta }{N}\ \sum \limits_{i = 1}^N T_\theta({\bf{x}}_i^n,{\bf{y}}_i^n) - \beta \log \frac{1}{N} \sum \limits_{i = 1}^N e^{{T_\theta}({\bf{x}}_i^n,{\bf{\bar y}}_i^n)} \notag
    \intertext{}
    - \frac{{1 - \beta}}{N} & \sum \limits_{i = 1}^N T_\theta ({\bf{x}}_i^n,{\rm{\bar z}}_i^n) + (1 - \beta )\log \frac{1}{N} \sum \limits_{i = 1}^N e^{{T_\theta }({\bf{x}}_i^n,{\rm{\bar z}}_i^n)}.
        \label{eq:Loss_mine}
\end{align}

Furthermore, we consider the worst case when the noise power at the eavesdropper $\sigma_e^2=0$  \cite{goel2008guaranteeing}. Then, the covariance of noise at Eve in \eqref{eq:Cov_noise_Eve} becomes $\mathbf{K}_k'=(\mathbf {G}_k\mathbf {Z}_k\mathbf {Z}_k^\dag \mathbf{G}_k^\dag)\sigma_v^2$. To simplify the notation, all occurrences of $R_k ^s$ will be understood as representing GSC from this point forward. Then, the guaranteed secrecy rate (GSC) is given by
\begin{align}
    R_k ^s &= {I_\Theta }({X_n},{Y_n}) - R_{AEmax} \notag\\
        &= {I_\Theta }({X_n},{Y_n}) -\log 
        \frac {\det (\mathbf {K'}_k+\mathbf {G}_k^\dag\mathbf {Q}_s\mathbf {G}_k)} {\det\mathbf {K'}_k}. \label{eq:Cs_gsc}
\end{align}
That optimization problem can be resolved by performing SVD on $\mathbf{H}_k$ and the second-order perturbation analysis~\cite{mukherjee2010robust}. However, this method requires exponential complexity when the number of antennas increases. Thus, we leverage the non-convex optimization capability provided by deep neural networks and then directly solve the problem with sufficient training data by considering the CSI error as an input for the training process.
\section{Simulation Results and Discussion}
\label{sec:sim}
\subsection{Simulation Setups and Parameters}

In this section, we use the Montecarlo method to evaluate the effectiveness of the proposed approaches. The secrecy rate is averaged over $10^6$ iterations. The channels  $\mathbf{H}_k $ and $ \mathbf{G}_k $ are generated as complex Gaussian matrix~\cite{telatar1999capacity}, where $h _{i,j}$ and $ g _{i,j}$ are assumed to be i.i.d. Gaussian with $ \mathbb{E} |h|_{i,j}^2 = \mathbb{E}|g| _{i,j}^2 = 1 $. The real and imaginary parts of the entries in the channel matrices $\mathbf{H}_k$ and $\mathbf{H}_k$ are standard Gaussian i.i.d.. We generate a complex-valued channel matrix where the real and imaginary parts are normally distributed. This results in a Rayleigh-distributed channel matrix magnitude commonly used to model fading in wireless communication channels. The phase of the channel follows a uniform distribution from $0$ to $2\pi$. The total power $ P_{\max} $ is normalized by the power of AWGN noise $ \sigma_n^2= \sigma_e^2=1 $. The SNR values in the training are set to vary from $10$~dB to $25$~dB, guaranteeing the generality of the trained model. For analysis, the eavesdropper is assumed to have the same neural architecture decoder as the Rx, as in Fig.~\ref{fig:AEFJ}. The AE at the Tx-Rx channel inducing FJ and the network at the eavesdropper are trained simultaneously as a one-input-two-output network with the loss function in~\eqref{eq:cost}.

To see the advantages of the proposed method, we compare our method with two baselines proposed in \cite{goel2008guaranteeing} and \cite{mukherjee2010robust}. The former first uses multiplexing as the precoding technique and then adopts executive search to optimize the average secrecy rate $R_{kex}^s$. Differently, the latter is based on the perturbation/error second-order analysis, called ``So'', to optimize the average secrecy rate $R_{kSo}^s$. Further, we use the state-of-the-art DL library Tensor Flow with Adam Optimizer for training. The input and output layers have $16$ neurons, representing a symbol of $4$ bits. The channel layer includes $N_t$ neuron width representing the number of transmit antennas. The network architecture and its parameters are summarized in Table~\ref{tab:Parameter_setting}.

\begin{table}[!tb]
    \caption{Architecture and Parameters of LCD-FJ Model}
    \label{tab:Parameter_setting}
    \centering
    \begin{tabular}{p{0.4\linewidth} p{0.25\linewidth} p{0.2\linewidth}}
    \hline\hline
    Layer (type), number of nodes & Output shape & \# parameters \\
    \hline
    imperfect\_CSI (InputLayer) & [(None, 1, 2, 64)] & 0 \\
    batch\_normalization\_24 (BatchNormalization) & (None, 1, 2, 64) & 256 \\
    flatten\_8 (Flatten) & (None, 128) & 0 \\
    batch\_normalization\_25 (BatchNormalization) & (None, 128) & 512 \\
    dense\_24 (Dense) & (None, 256) & 33,024 \\
    batch\_normalization\_26 (BatchNormalization) & (None, 256) & 1,024 \\
    dense\_25 (Dense) & (None, 128) & 32,896 \\
    dense\_26 (Dense) & (None, 64) & 8,256 \\
    perfect\_CSI (InputLayer) & [(None, 64)] & 0 \\
    lambda\_16 (Lambda) & (None, 64) & 0 \\
    SNR\_input (InputLayer) & [(None, 1)] & 0 \\
    lambda\_17 (Lambda) & (None, 1) & 0 \\
    \hline
    \end{tabular}
\end{table}

\subsection{Experimental Results}
We first implement the AE-based FJ model in the case of perfect CSI and compare its performance with that of the exhaustive search method in~\cite{goel2008guaranteeing}, as shown in Fig.~\ref{fig:Full_CSI_COMPARE}. In the AEFJ method, the channel matrix is considered an extra input parameter at the encoder and decoder in the AEFJ model. The parameter $\alpha$ was chosen as $0.5$ to make the balance between transmission and capacity. We consider two different values of the number of antennas at the Tx; $N_t = 10, 20$. As can be seen, the average secrecy rate of the proposed AEFJ method is close to that of the optimal scheme, i.e., the exhaustive search~\cite{goel2008guaranteeing}), when the number of antennas at the Tx is small, $N_t = 4$, confirming the effectiveness of the proposed AEFJ scheme. This can be explained by the capability of the AE to learn the features of channel distribution with sufficient data and perfect CSI.

\begin{figure}[!tb]    
 \center
 \includegraphics[width=\linewidth]{./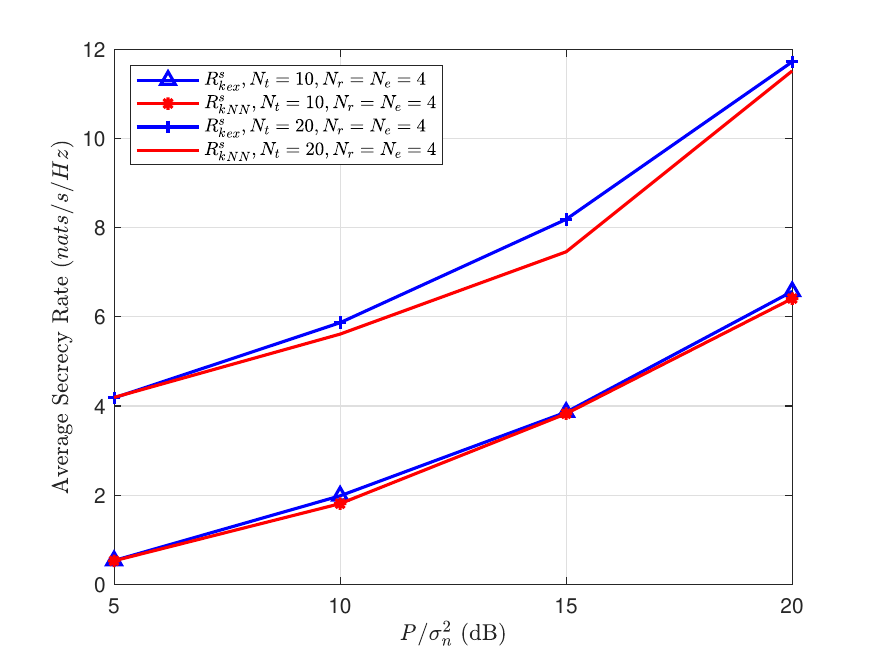}
    \caption{Average secrecy rate versus $P_{\max}/\sigma_n^2$ under ImCSI.}
    \label{fig:Full_CSI_COMPARE}
\end{figure}

Regarding BLER, Fig.~\ref{fig:BLER_FullCSI} shows the BLER of the proposed method at the Rx and the eavesdropper. By assuming that the eavesdropper and the Rx have the same model, we can see that the BLER at the eavesdropper's decoder is much higher than the Rx's. Although the BLER of the proposed method is higher than that of the ML-based method, the former provides much less computational complexity than the latter. 

\begin{figure}[!tb]
    \centerline{\includegraphics[width=\linewidth]{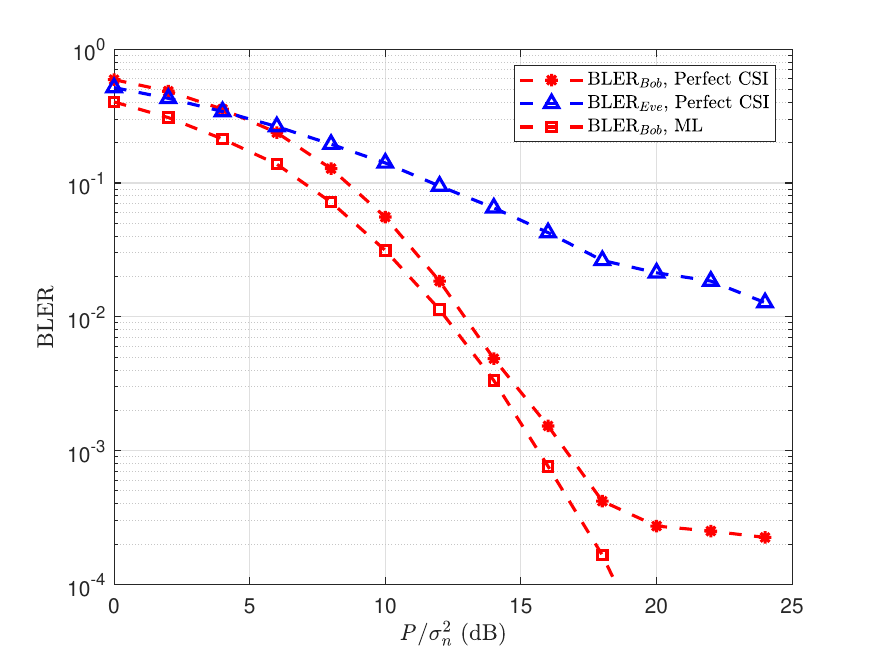}}
    \caption{BLER at Rx and eavesdropper with FJ; $\mathbf{H}_k$ perfectly known.}
    \label{fig:BLER_FullCSI}
\end{figure}

\begin{figure}[!tb]
    \centering
    \centerline{\includegraphics[width=\linewidth]{./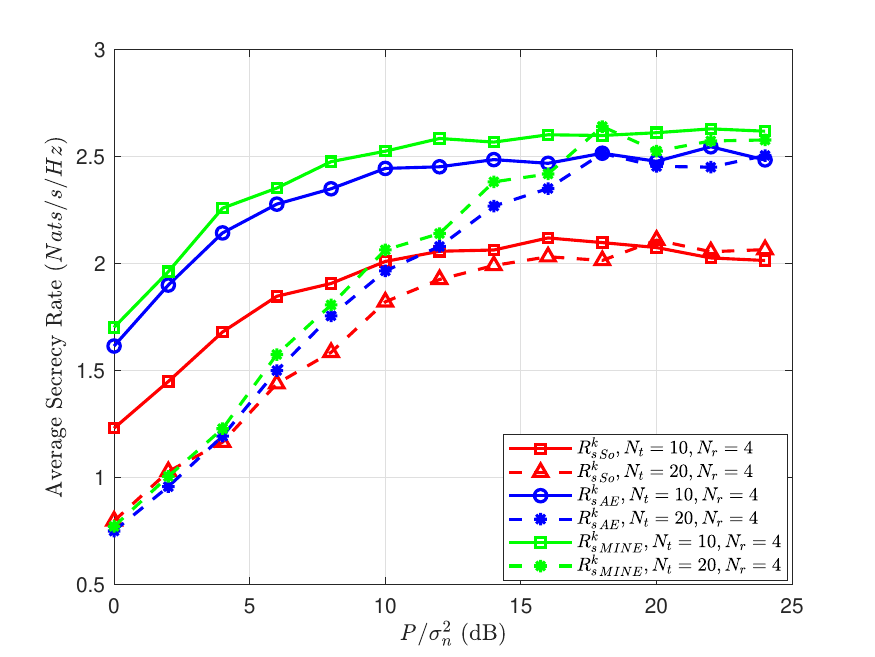}}
    \caption{Average secrecy rate versus $P/ {\sigma_n}^2$ (dB).}
    \label{fig:CS_SNR}	
\end{figure} 	    

Regarding ImCSI, our target is to compare the average secrecy rate achieved by our method with the work \cite{mukherjee2010robust}, where full CSI is unavailable at the Tx. The parameter $\sigma_e=-20 $~dB or $0.1$ represents the level of perturbation or channel estimation error. The security communication model is illustrated in Table~\ref{tab:quasi_static2}.
\begin{table}[!tb]
    \caption{Architechture and Parameters of AEFJ with SCSI}
    \label{tab:quasi_static2}
    \centering
    \begin{tabular}{m{2.7cm} m{5.3cm}}
    \hline\hline
    Encoder & Encode a message $m_k$ to $\mathbf{s}_k$ \\
    \hline
    LCD FJ & $\mathbf{w}_k$ orthogonal to $\mathbf{s}_k$ \\
    \hline
    Power constraints/norm & Normalize average power \\
    \hline
    Channel & Generate random complex channel $\mathbf{H}_k$ \\
    \hline
    Lambda & Matrix multiplication $\mathbf{x}_k$ with $\mathbf{H}_k$ \\
    \hline
    Dense & Simulate Tx-Rx channel, estimating mapping function $Q(\mathbf{x}_k, \tilde {\mathbf{H}_k)}$ \\
    \hline
    Softmax function & Calculate loss to optimize MI \\
    \hline
    \end{tabular}
\end{table}
As shown in Fig.~\ref{fig:CS_SNR}, the proposed AEFJ scheme offered a better average secrecy rate than So based-method~\cite{mukherjee2010robust}. We note that the secrecy rate increased linearly at the low SNR region but tended to be saturated at high normalized transmit power, i.e., from $10$ dB to $25$ dB. In addition, the higher secrecy rate can be seen in the MINE-based method compared to AEFJ~\cite{10075335}. The reason is the former optimizes the MI directly while the latter depends on the cross entropy loss. Moreover, the higher the number of transmit antennas, the higher the secrecy rate for both schemes. This stems from the fact that the proposed AE architecture can effectively perform denoising and capturing features of the channel with sufficient data.
The MINE security loss function \eqref{eq:Loss_mine} is used to evaluate the performance of the security approach.  
The network architecture of the MINE-FJ scheme is shown in Table~\ref{tab:ImCSI}.
\begin{table}[!tb]
    \caption{Description of function layers in AEFJ with ImCSI}
    \label{tab:ImCSI}
    \centering
    \begin{tabular}{m{2.7cm} m{5.3cm}}
    \hline\hline
    Input & Concatenate $x_k$ and $\tilde{H}_k$, converted to real domain from complex domain \\
    \hline
    Power constraints/norm & Normalize average power \\
    \hline
    Channel & Generate imperfect complex channel $\tilde{H}_k$ \\
    \hline
    Lambda & Matrix multiplication $x_k$ with $\tilde{H}_k$ \\
    \hline
    Hidden & Simulate Tx-Rx channel, estimating mapping function: $Q(x_k, \tilde{H}_k)$ \\
    \hline
    Output Layer & $\hat{x}_k$, and the activation function is soft-max for a reconstruction problem \\
    \hline
    Optimizer & Adam optimizer \\
    \hline
    Loss Function & Equation (18) \\
    \hline
    \end{tabular}
\end{table}

Fig.~\ref{fig:AveC} illustrates the secrecy rate with two different values of $\beta$, where the number of transmit antennas is $N_t=3$, each with $400$ iterations and the batch size of $20,000$. We observe that the higher the value of $\beta$, the higher the secrecy rate was obtained. This shows the trade-off between the communication and secrecy rates due to the influence of the FJ signal. Fig.~\ref {fig:BLER} shows BLER at both the Rx and the eavesdropper in a range of SNR values before and after a secure communication was applied by AEFJ. The black curves represent BLER at Rx when using MINE-based FJ with and without FJ. The red curves represent BLER at Rx when using AE-based FJ with and without FJ, whereas the blue curve illustrates the BLER at the eavesdropper. We notice two important points. Firstly, the BLER at Rx increased, with the gap of 3dB at $10^{-4}$ of BLER. It can be explained that a portion of power is allocated for jamming. Secondly, there was a significant increase in the BLER at the eavesdropper using AEFJ. The higher BLER at the eavesdropper is because of the effect of FJ signals on the eavesdropper channel, which makes the information signals undecodable at the eavesdropper.

\begin{figure}[!tb]
    \centerline{\includegraphics[width=\linewidth]{./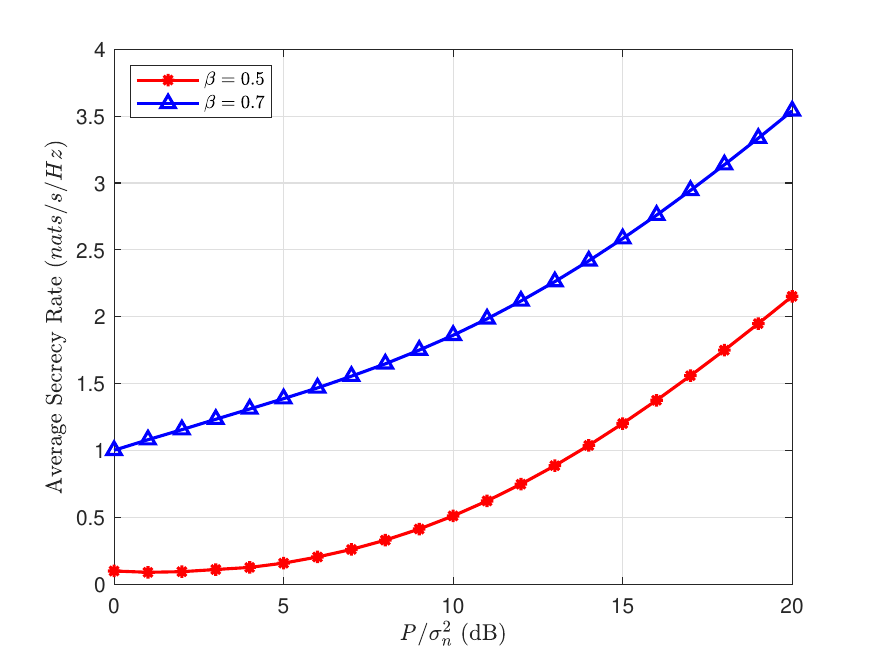}}
    \caption{Average secrecy rate with different values of $\beta$.}
    \label{fig:AveC}
\end{figure}

\begin{figure}[!tb]
    \centerline{\includegraphics[width=\linewidth]{./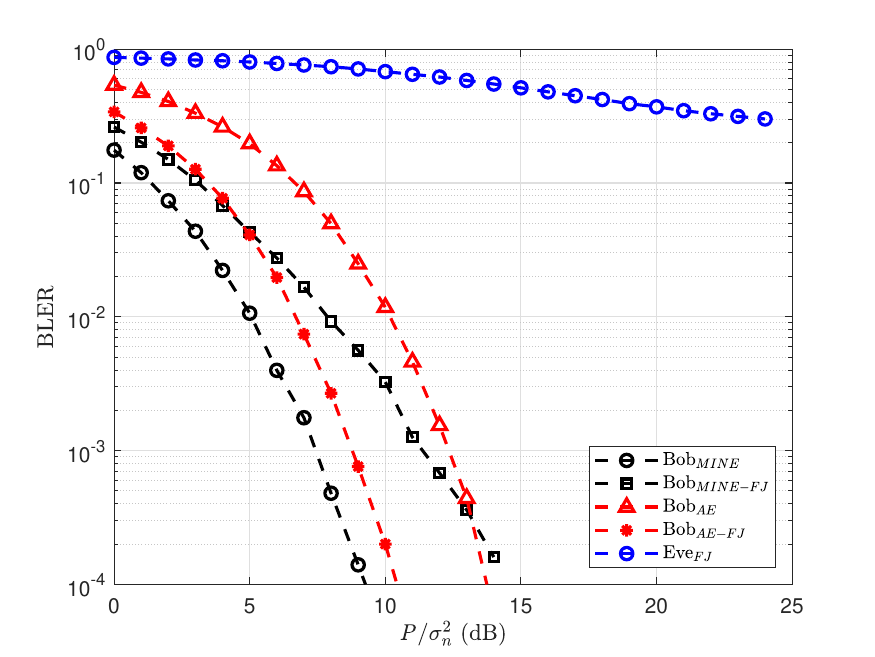}}
    \caption{BLER at Rx and eavesdropper with AEFJ and MINE-FJ.}
    \label{fig:BLER}
\end{figure}

\begin{figure}[!tb]
    \centering\centerline{\includegraphics[width=\linewidth]{./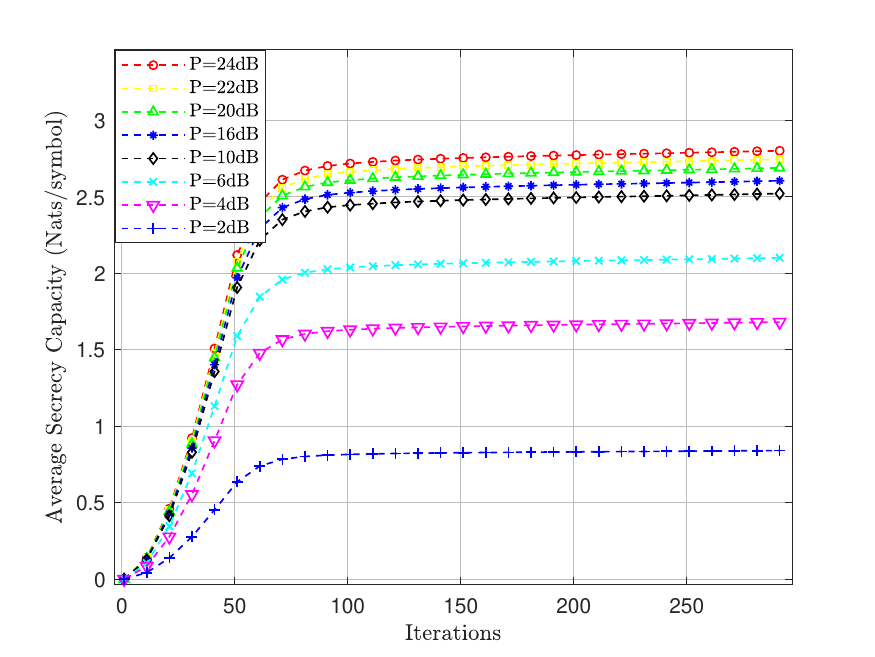}}
    \caption{Secrecy rate vs. number of iterations; $N_t=10$, $N_r=4$ and $N_e=4$.}
    \label{fig:MINE-MI}
\end{figure}

In Fig.~\ref{fig:MINE-MI}, the convergence of MINE-FJ is provided with different SNR levels using the security model in Fig.~\ref{fig:MINE-FJ}. It can be seen that the MINE approach quickly converged to a stationary point after about  $100$ iterations. In addition, the secrecy rate improved dramatically when the SNR increased but tended to saturate for a sufficiently high SNR. This saturation can be explained as when the distribution of the transmit is close to receiving samples.

    
The relationship between the average secrecy rate and the number of transmit antennas is provided in Fig.~\ref{fig:R_S_vs_NT}. It shows that the average secrecy rate increased rapidly with the number of antennas $N_t$. This is attributed to the fact that the higher the number of transmit antennas at the Tx, the more the degree of freedom can be added to the system to transmit the desired signal and design the FJ signal more effectively.
\begin{figure}[!tb]
    \centering                   \centerline{\includegraphics[width=\linewidth]{./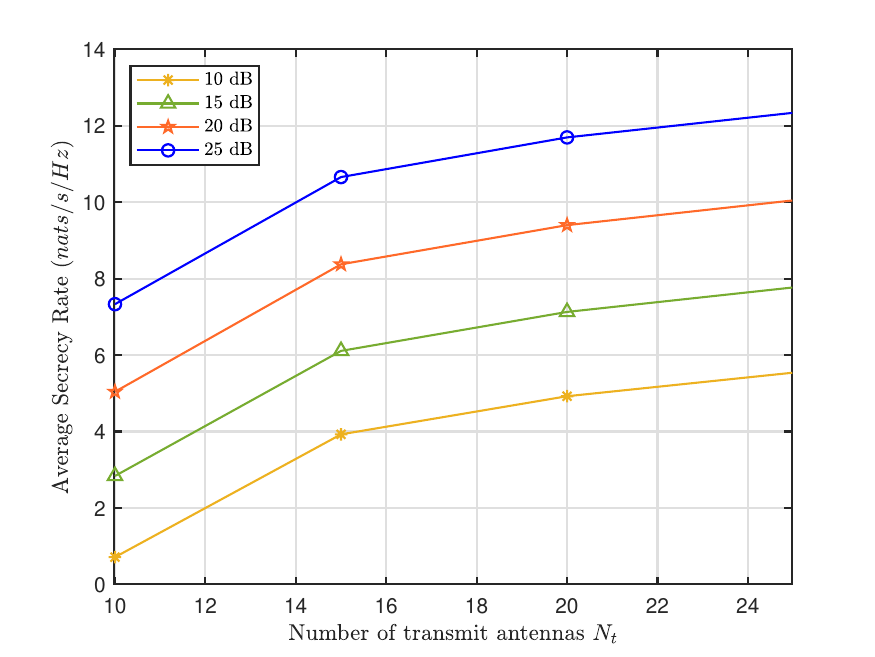}}
    \caption{Secrecy rates vs. $N_t$.}
    \label{fig:R_S_vs_NT}
\end{figure}
In Fig.~\ref{fig:MI-BER}, we show a trade-off between the secrecy rate and BLER with and without using FJ. As expected, the higher the BLER, the lower the average secrecy rate is obtained. 
\begin{figure}[!tb]
    \centerline{\includegraphics[width=\linewidth]{./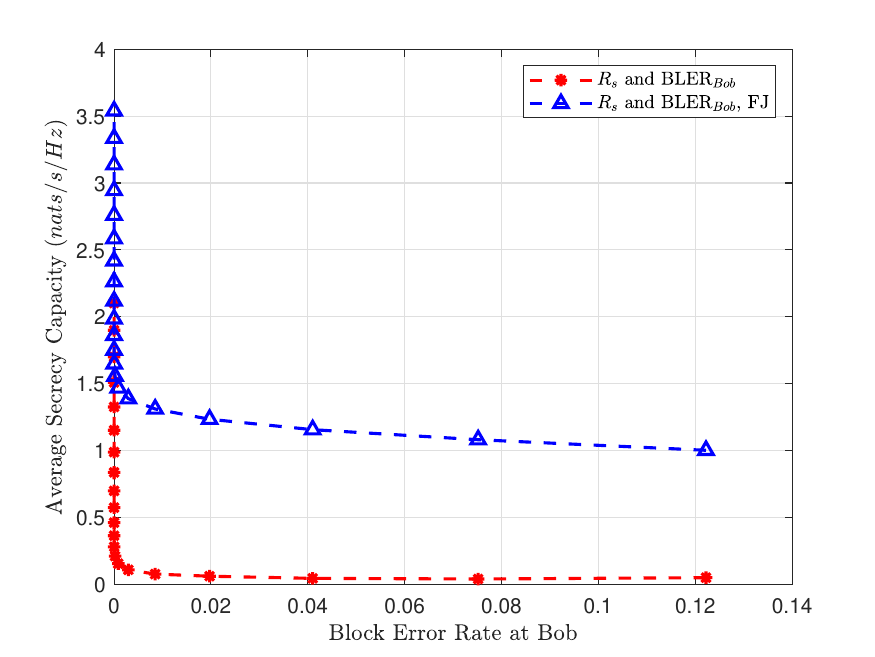}}
    \caption{Average secrecy rate and BLER at Rx; $N_t=10$, $N_r=4$ and $N_e=4$.}
    \label{fig:MI-BER}
\end{figure}

The loss function of the AE-based FJ model is designed to optimize secrecy and communication reliability, and it is minimized during training. Fig.~\ref{fig:training_loss_AEFJ_1} illustrates a consistent decrease in loss over the training epochs, eventually stabilizing as the model parameters converge to an optimal solution. This convergence highlights the model's ability to effectively implement the friendly jamming technique under imperfect CSI conditions, maximizing the secrecy rate while minimizing the block error rate. The figure presents both training and validation losses over 8 epochs. The training loss decreases from 4.8 to 4.0, while the validation loss shows a slight reduction, indicating steady improvement. Most loss reduction occurs in the first 2 epochs, with losses dropping by about 0.4. After this initial phase, convergence slows, leading to more minor changes per epoch. This pattern reflects efficient learning and good generalization, as both losses follow similar trends and converge steadily.
\begin{figure}[h]
\centering
\includegraphics[width=0.85\linewidth]{./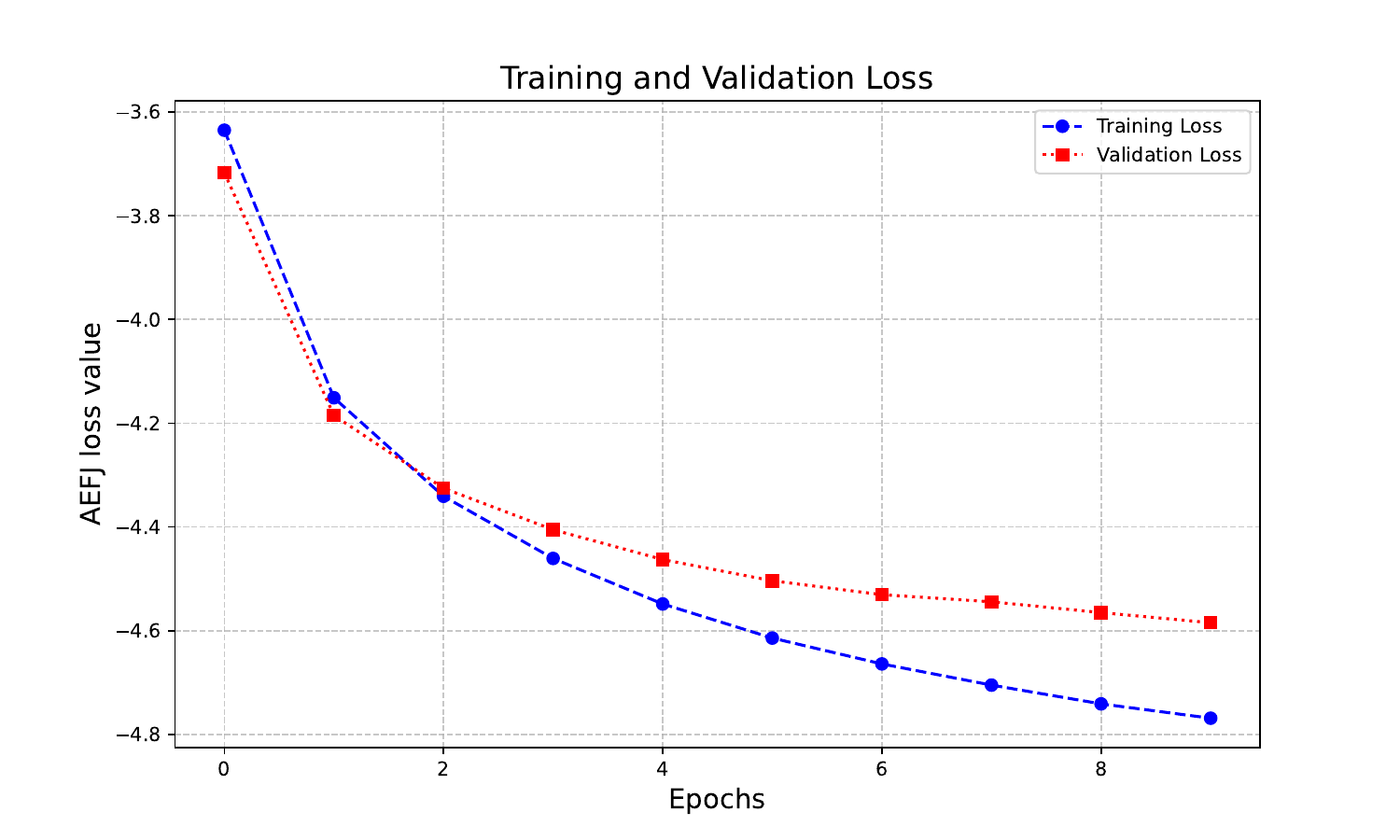}
\caption{AE-based FJ training Loss.}
\label{fig:training_loss_AEFJ_1}
\end{figure}

\begin{figure}[t]
\centering
\includegraphics[width=0.85\linewidth]{./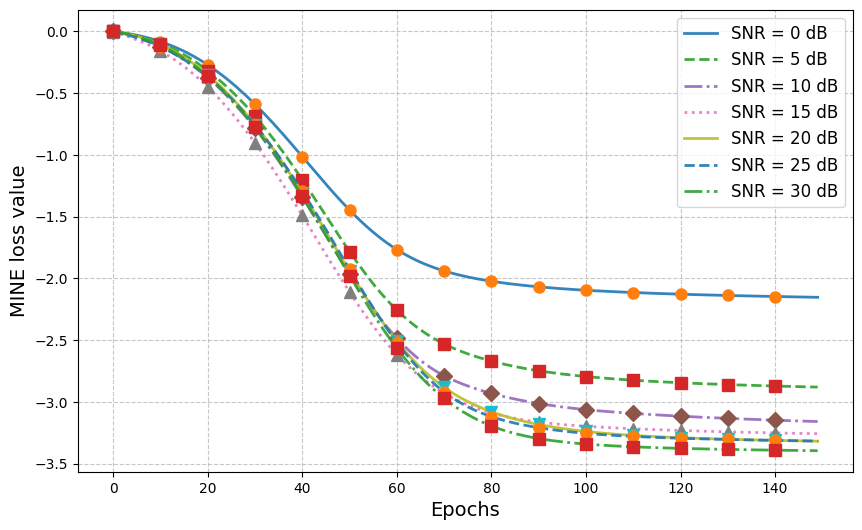}
\caption{MINE-based FJ training Loss.}
\label{fig:MINE_100}
\end{figure}

The training graph for the MINE-based FJ model in Fig. \ref{fig:MINE_100}. This illustrates the convergence of mutual information, represented as the negative of the loss during training. The model optimizes secrecy without requiring CSI at the transmitter and shows a distinct convergence pattern. The plot illustrates the loss behaviour across epochs for various SNR values (0 dB to 30 dB), revealing that higher SNRs lead to faster convergence and lower final loss values. At lower SNRs, such as 0 dB and 5 dB, the loss decreases slowly and stabilizes at higher levels, indicating poorer performance. In contrast, for higher SNRs (20 dB, 25 dB, and 30 dB), the loss decreases rapidly and stabilizes after around 100 epochs, reflecting improved efficiency in minimizing mutual information loss. These trends demonstrate that higher SNRs facilitate more effective communication by reducing noise, allowing for better signal separation and lower loss values.
        
\subsection{Complexity}

This subsection compares the computational complexity as the number of floating-point operations (FLOPs) between AEFJ and the conventional methods. Herein, we only counted the complexity in the deployment stage since the training stage can be seen as offline. The number of FLOPs of a dense layer is referred to in the work in~\cite{9134818}, which is equal to $(2N_{Input}-1)N_{Output}$, where $N_{Input}$ and $N_{Output}$ are the input and output dimensions, respectively. For example, when $N_t = 20$ and the transmit symbol is encoded into a one-hot vector and considering the architecture of our model in Table~\ref{Tab:complexity}, the FLOPs is $6,144$. For the exhausting search and ``So" method mentioned previously, the asymptotic computational complexity is in the order of $\mathcal{O} (N_t^3)$ as they perform SVD and matrix inversion.

 \begin{table}[!tb]
    \caption{Parameters and FLOP of AEFJ.}
    \centering
    \begin{tabularx}{0.49\textwidth} { 
            | >{\raggedright\arraybackslash}X 
            | >{\centering\arraybackslash}X 
            | >{\centering\arraybackslash}X | }
        \hline
        Name of Layers &Input and Output Dimensions
            & FLOPs \\
        \hline\hline
        Input Layer  & 16; 64 & 2,048 \\
        \hline
        Dense Layer 2  & 64; 8  & 1,024  \\
        \hline
        Dense Layer 3  & 8; 64  & 1,024  \\
        \hline
        Output layer  & 64; 16  & 2,048  \\
        \hline
    \end{tabularx}
    \label{Tab:complexity}
\end{table}

\section{Conclusion}\label{sec:VII}
In this paper, we have introduced a novel DL-based FJ approach to deal with the eavesdropping issues in MIMO-based systems. First, we have proposed the AEFJ scheme by leveraging the Auto Encoder-based E2E learning at both the Tx and the Rx, which has shown to achieve both communication secrecy and reliability compared to the conventional methods. In addition, we have leveraged MINE to design a robust security scheme that can deal with different levels of availability of CSI at Tx, from perfect to statistical or even unknown CSI. Simulation results have showed the comparable security performance of MINE to AEFJ under the cross-entropy security loss function. Further, the secrecy rate was optimized independently at the Tx and the Rx using the MI neural estimator. Thus, the MINE-based FJ is promising for applications that require fast deployment and lightweight but effective security methods. Our approach here can be extended to other practical settings, e.g., for IRS/RIS-aided systems \cite{9852985, 10437125825}.
 
 \bibliographystyle{IEEEtran}
	\bibliography{ref_Journal}
\end{document}